\newcommand{\cmark}{\ding{51}}
\newcommand{\slfrac}[2]{\left.#1\middle/#2\right.}
\newtheorem{lemma}{Lemma}
  \providecommand\BibTeX{{%
    \normalfont B\kern-0.5em{\scshape i\kern-0.25em b}\kern-0.8em\TeX}}}
\begin{document}
\title{Scalable Fine-Grained Parallel Cycle Enumeration Algorithms}

\author{Jovan Blanu\v{s}a}
\affiliation{
  \institution{IBM Research Europe}
  \city{Zurich}
  \country{Switzerland}
}
\additionalaffiliation{
  \institution{Ecole Polytechnique Fédérale de Lausanne (EPFL)}
  \department{School of Computer and Communication Sciences}
  \city{CH-1015 Lausanne}
  \country{Switzerland}
}
\email{jov@zurich.ibm.com}

\author{Paolo Ienne}
\affiliation{
  \institution{Ecole Polytechnique Fédérale de Lausanne (EPFL)}
  \department{School of Computer and Communication Sciences}
  \city{CH-1015 Lausanne}
  \country{Switzerland}
}
\email{paolo.ienne@epfl.ch}

\author{Kubilay Atasu}
\affiliation{%
  \institution{IBM Research Europe}
  \city{Zurich}
  \country{Switzerland}
}
\email{kat@zurich.ibm.com}

\begin{CCSXML}
<ccs2012>
<concept>
<concept_id>10003752.10003809.10003635</concept_id>
<concept_desc>Theory of computation~Graph algorithms analysis</concept_desc>
<concept_significance>500</concept_significance>
</concept>
</ccs2012>
\end{CCSXML}

\ccsdesc[500]{Theory of computation~Graph algorithms analysis}

\begin{abstract}
Enumerating simple cycles has important applications in computational biology, network science, and financial crime analysis.
In this work, we focus on parallelising the state-of-the-art simple cycle enumeration algorithms by Johnson and Read-Tarjan along with their applications to temporal graphs.
To our knowledge, we are the first ones to parallelise these two algorithms in a fine-grained manner. We are also the first to demonstrate experimentally a linear performance scaling.
Such a scaling is made possible by our decomposition of long sequential searches into fine-grained tasks, which are then dynamically scheduled across CPU cores, enabling an optimal load balancing.
Furthermore, we show that coarse-grained parallel versions of the Johnson and the Read-Tarjan algorithms that exploit edge- or vertex-level parallelism are not scalable.
On a cluster of four multi-core CPUs with $256$ physical cores, our fine-grained parallel algorithms are, on average, an order of magnitude faster than their coarse-grained parallel counterparts.
The performance gap between the fine-grained and the coarse-grained parallel algorithms widens as we use more CPU cores.
When using all 256 CPU cores, our parallel algorithms enumerate temporal cycles, on average, $260\times$ faster than the serial algorithm of Kumar and Calders.
\\

\vspace{-.07in}
\noindent \textbf{Code repository}:{ \small https://github.com/IBM/parallel-cycle-enumeration}
\end{abstract}

\keywords{Cycle enumeration; Parallel graph algorithms; Graph mining}

\maketitle

\section{Introduction}

\begin{figure}[t]
    \includegraphics[width=0.98\linewidth]{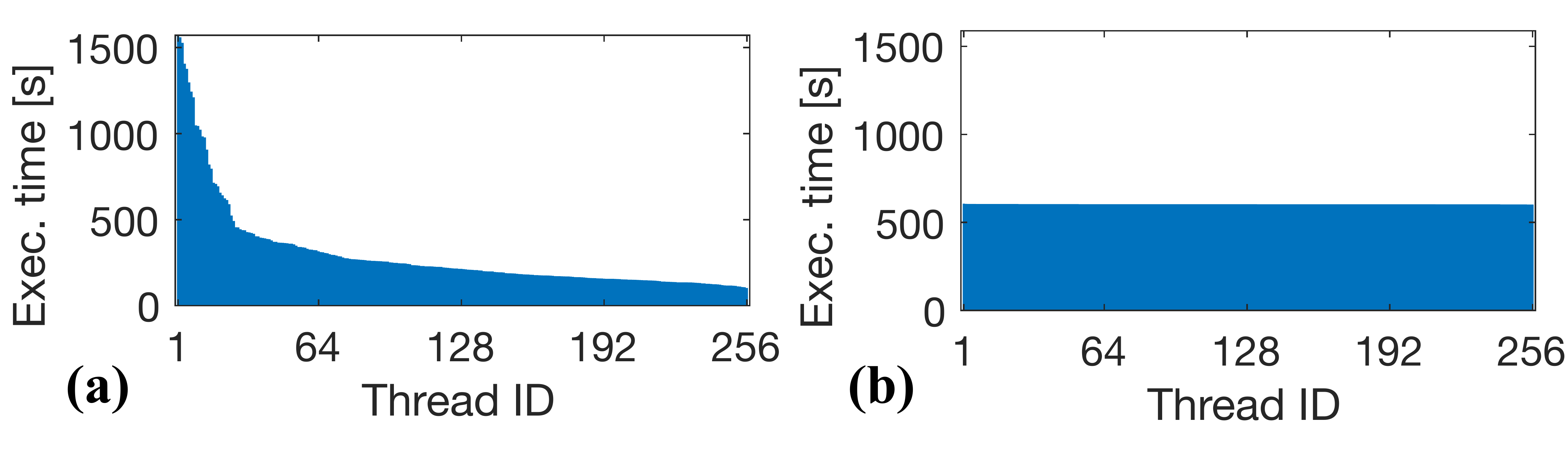}
	\vspace{-.15in}
	\caption{Per-thread execution time of (a) the coarse-grained Johnson algorithm vs. (b) our fine-grained parallel algorithm
	using the \textit{wiki-talk} graph and a $\mathbf{12h}$ time window.
	Thanks to a perfect load balancing, our fine-grained method is $\mathbf{3\times}$ faster on a $\mathbf{64}$-core CPU executing $\mathbf{256}$ threads.}
	\vspace{-.15in}
	\label{fig:motivfig}
\end{figure}

A graph-based data representation is desirable when analyzing large and complex datasets because it exposes the connectivity of the underlying data objects and enables the discovery of complex relationships between them~\cite{needham2019graph}.
Analysing graph-structured data has important applications in many domains, such as finance~\cite{neo4j_whitepaper_finance}, healthcare~\cite{wang_recent_2020}, cybersecurity~\cite{noel_cygraph_2016}, and advertising~\cite{neo4j_whitepaper_reccomendation}.
The existence of certain patterns, such as cycles, cliques, and motifs, in a graph can reveal nontrivial relationships between different graph objects~\cite{aggarwal_managing_2010}.
As the volume of graph data continues to grow, the discovery of such relationships becomes computationally challenging, necessitating more efficient algorithms and scalable parallel implementations that can exploit modern multi-core processors.

\textbf{Simple cycles and temporal cycles.} This paper introduces efficient parallel algorithms for enumerating simple cycles of directed graphs.
A simple cycle is a sequence of edges that starts and ends with the same vertex and visits other vertices of the graph at most once. 
Enumerating simple cycles has important applications in computational biology~\cite{kwon_analysis_2007, klamt_computing_2009}, network science~\cite{giscard_evaluating_2017,zhou_cycle_2018}, software bug tracking~\cite{sas_example}, and electronic design automation~\cite{neiroukh_transforming_2008, gupta_acyclic_2005, pothukuchi_dhuria_2021}.

Furthermore, some graphs have their edges annotated with timestamps, which we refer to as temporal graphs. In such graphs one can also look for temporal cycles~\cite{kumar_2scent_2018}, which are special cases of simple cycles, in which the edges are ordered in time.
For instance, in financial transaction graphs, a temporal cycle represents a series of transactions in which the money initially sent from one bank account returns back to the same account; the existence of such cycles is a strong indicator of financial fraud such as money laundering, tax avoidance~\cite{hajdu_temporal_2020, AMLSim}, and credit card fraud~\cite{qiu_real-time_2018}.
Finding temporal cycles in temporal graphs also enables detecting circular trading, which can be used for manipulating stock prices~\cite{palshikar_collusion_2008, islam_approach_2009, jiang_trading_2013}.

\textbf{Parallelisation challenges.} We focus on parallelising the algorithms by Johnson~\cite{johnson_finding_1975} and Read-Tarjan~\cite{read_bounds_1975} for finding cycles because these algorithms achieve the lowest time complexity bounds reported for directed graphs~\cite{mateti_algorithms_1976}.
Both algorithms are recursively formulated and construct a recursion tree in a depth-first fashion.
However, these algorithms employ different pruning techniques to limit the size of their recursion trees.
In practice, the Johnson algorithm is faster than the Read-Tarjan algorithm because it uses more aggressive recursion-tree pruning techniques~\cite{kao_enumeration_2016,mateti_algorithms_1976}.

The na\"ive way of parallelising the aforementioned algorithms involves searching for cycles starting from different vertices or edges in parallel, which we refer to as the coarse-grained parallel methods.
Such coarse-grained parallel approaches are straightforward to implement using the popular vertex-centric~\cite{malewicz_pregel_2010, mccune_thinking_2015} and edge-centric~\cite{roy_x-stream_2013} graph processing frameworks.
However, real-world graphs often exhibit a power-law or a log-normal distribution of vertex degrees~\cite{barabasi_network_2016, broido_scale-free_2019}.
In such graphs, the execution time of coarse-grained parallel approaches is dominated by searches that start from a small set of vertices or edges as illustrated in Figure~\ref{fig:motivfig}a.
This behaviour leads to a workload imbalance and limits scalability.

The shortcomings of coarse-grained parallel approaches can be addressed by decomposing the search for cycles starting from a given edge or vertex into finer-grained tasks.
Fine-grained parallelism has been exploited by other graph mining algorithms~\cite{blanusa_manycore_2020, das_shared-memory_2019, abdelhamid_scalemine_2016}.
However, to our knowledge, it has not been applied to asymptotically-optimal simple cycle enumeration algorithms, such as the Johnson algorithm and the Read-Tarjan algorithm.
In particular, the pruning efficiency of the Johnson algorithm depends on a strictly sequential depth-first-search-based recursion tree exploration. As such, decomposing the Johnson algorithm into fine-grained tasks is not possible without giving up some of its pruning efficiency. In contrast, the Read-Tarjan algorithm does not require a strictly sequential depth-first-search-based recursion tree exploration, hence, it is easier to decompose into fine-grained tasks.

\begin{table}[t]
\centering
\caption{
Our fine-grained parallel Read-Tarjan algorithm is the only solution that is both work efficient and scalable.}
\vspace{-.1in}
\addtolength{\tabcolsep}{-3pt}
\begin{tabular}{l|cc}
\textbf{Parallel algorithm}               &  \textbf{Work efficient} & \textbf{Scalable}  \\ \hline
Coarse-grained parallel algorithms  &  \cmark  &        \\
Our fine-grained parallel Johnson    &  & \cmark   \\
Our fine-grained parallel Read-Tarjan  & \cmark  & \cmark     \\ \hline
\end{tabular}
\label{tab:theoSummary}
\vspace{-.2in}
\end{table}

\textbf{Contributions.} 
In this paper, we contribute scalable parallel versions of the Johnson and the Read-Tarjan algorithms.
To our knowledge, we are the first ones to parallelise these two algorithms in a fine-grained manner. We are also the first to demonstrate an almost linear performance scaling on a system that can execute up to a thousand concurrent software threads.
Such a scalability is enabled by our decomposition of long sequential searches into fine-grained tasks, which are then dynamically scheduled across CPU cores, leading to an ideal load balancing as shown in Figure~\ref{fig:motivfig}b.

To decompose the Johnson algorithm into fine-grained tasks, we have relaxed its strictly depth-first-search-based exploration. In this way, we have enabled it to perform multiple independent depth-first searches in parallel. However, this additional flexibility is at the expense of some pruning efficiency. Because of the reduced pruning efficiency, our fine-grained parallel Johnson algorithm performs more work than its serial version--- i.e., it is not work-efficient. In contrast, our fine-grained parallel Read-Tarjan algorithm does not perform more work than its sequential version and is work efficient. Nevertheless, both fine-grained algorithms are scalable both in theory and in practice.
Table~\ref{tab:theoSummary} shows the key results of our theoretical analysis. Interestingly, despite not being work efficient, our fine-grained Johnson algorithm  outperforms our fine-grained parallel Read-Tarjan algorithm in most of our experiments.

\textbf{Paper structure.} The remainder of this paper is organised as follows.
Section~\ref{sect:related_work} discusses the related work. 
Section~\ref{section:background} introduces the notation used, formally defines the concepts of work efficiency and scalability, and covers state-of-the-art sequential algorithms for finding simple cycles.
Section~\ref{sect:vertEdgePar} covers coarse-grained parallel versions of the Johnson and the Read-Tarjan algorithms.
Section~\ref{sect:tpJohnson} and Section~\ref{sect:tpReadTarjan} introduce our fine-grained parallel versions of the Johnson and the Read-Tarjan algorithms, respectively.
Section~\ref{sect:temporalCycle} discusses adaptations of our algorithms to compute temporal cycles.
Section~\ref{sect:experiments} provides an experimental evaluation of all the parallel algorithms covered in this work.
Section~\ref{sect:conclusion} presents our conclusions.

\section{Related Work}
\label{sect:related_work}
\textbf{Simple cycle enumeration algorithms.}
Enumeration of simple cycles of graphs is a classical computer science problem~\cite{tiernan_efficient_1970, tarjan_enumeration_1973, johnson_finding_1975, read_bounds_1975, mateti_algorithms_1976, Szwarcfiter1976ASS, kao_enumeration_2016, weinblatt_new_1972, loizou_enumerating_1982, birmele_optimal_2013}. 
The backtracking-based algorithms by Johnson~\cite{johnson_finding_1975}, Read and Tarjan~\cite{read_bounds_1975}, and Szwarcfiter and Lauer~\cite{Szwarcfiter1976ASS} achieve the lowest time complexity bounds reported in the literature for enumerating simple cycles in directed graphs.
These algorithms implement advanced recursion tree pruning techniques to improve on the brute-force  Tiernan algorithm~\cite{tiernan_efficient_1970}. Section~\ref{section:back_johnson} covers such pruning techniques in further detail.
A cycle enumeration algorithm that is asymptotically faster than the aforementioned algorithms~\cite{johnson_finding_1975, read_bounds_1975, Szwarcfiter1976ASS} has been proposed in Birmelé et al.~\cite{birmele_optimal_2013}, however, it is applicable only to undirected graphs.
The algorithms for simple cycle enumeration can be specialised to find temporal cycles, such as in Kumar and Calders~\cite{kumar_2scent_2018}, and our parallel algorithms lend themselves to the same specialisation.
Simple cycles can also be enumerated by computing the powers of the adjacency matrix~\cite{danielson_finding_1968, kamae_systematic_1967, ponstein_self-avoiding_1966} or by using circuit vector space algorithms~\cite{mateti_algorithms_1976, gibbs_cycle_1969, welch_numerical_1965}, but the complexity of such approaches grows exponentially with the length of the cycles or the size of the graphs.

\begin{table}[t]
\centering
\caption{Capabilities of the related work versus our own. Competing algorithms either fail to exploit fine-grained parallelism or do it on top of asymptotically inferior \mbox{algorithms}.}
\vspace{-.15in}
\addtolength{\tabcolsep}{-2pt}
\begin{tabular}{l|cccccc}
\textbf{Related work} & \textbf{\cite{kumar_2scent_2018}} & \textbf{\cite{qiu_real-time_2018}} & \textbf{\cite{peng_towards_2019}} & \textbf{\cite{nah_efficient_2020}} & \textbf{\cite{gupta_finding_2021}}  & \textbf{Ours} \\ 
\hline 
Fine-grained parallelism    &       &       &       & \cmark &       & \cmark \\
Asymptotic optimality       & \cmark &       & \cmark &       & \cmark & \cmark \\ 
Temporal cycles             & \cmark &       &       &       &       & \cmark \\
Time-window constraints     & \cmark & \cmark &       &       &       & \cmark \\
Cycle-length constraints    &       & \cmark & \cmark & \cmark & \cmark & \\
\hline
\end{tabular}
\label{tab:relWork}
\vspace{-.15in}
\end{table}

\textbf{Cycle-length and time-window constraints.}
To make cycle enumeration problem tractable, it is common to search for cycles under some constraints. For instance, the length of the cycles---i.e., the maximum number of edges in the cycle, can be constrained, such as in Gupta and Suzumura~\cite{gupta_finding_2021}, Peng et al.~\cite{peng_towards_2019}, and Qiu et al~\cite{qiu_real-time_2018}. In temporal graphs, it is also common to search for cycles within a sliding time window, such as in Kumar and Calders~\cite{kumar_2scent_2018} and Qiu et al~\cite{qiu_real-time_2018}.
Constraining the length of the cycles or the size of the time windows effectively narrows down the search to a spatial or temporal neighbourhood, respectively.
In this work, we focus on temporal graphs, therefore, we use time window constraints when enumerating both simple and temporal cycles.
Note that length-constrained simple cycles can also be enumerated using incremental algorithms, such as in Qiu et al.~\cite{qiu_real-time_2018}.
However, this algorithm is based on the brute-force Tiernan algorithm~\cite{tiernan_efficient_1970}, which makes it slower than nonincremental algorithms that use recursion tree pruning techniques~\cite{peng_towards_2019}.
In addition, because incremental algorithms maintain auxiliary data structures, such as paths, to be able to construct cycles incrementally, they are not as memory-efficient as nonincremental algorithms~\cite{peng_towards_2019}.
Table~\ref{tab:relWork} offers comparisons between the capabilities of these methods and ours.

\textbf{Parallel and distributed algorithms for cycle enumeration.}
Cui et al. \cite{cui_multi-threading_2017} proposed a multi-threaded algorithm for detecting and removing simple cycles of a directed graph.
The algorithm divides the graph into its strongly-connected components and each thread performs a depth-first search on a different component to find cycles.
However, sizes of the strongly-connected components in real-world graphs can vary significantly~\cite{meusel_graph_2014}, which leads to a workload imbalance.
Rocha and Thatte~\cite{rocha_distributed_2015} proposed a distributed algorithm for simple cycle enumeration based on the bulk-synchronous parallel model~\cite{valiant_bridging_1990}, but it searches for cycles in a brute-force manner.
Qing et al.~\cite{nah_efficient_2020} introduced a parallel algorithm for finding length-constrained simple cycles. 
It is the only other fine-grained parallel algorithm we are aware of in the sense that it can search for cycles starting from the same vertex in parallel.
However, the way this algorithm searches for cycles is similar to the way the brute-force Tiernan algorithm~\cite{tiernan_efficient_1970} works.
To our knowledge, we are the first ones to introduce fine-grained parallel versions of asymptotically-optimal simple cycle enumeration algorithms, which do not rely on a brute-force search, as we show in Table~\ref{tab:relWork}.

\section{Background}
\label{section:background}

This section introduces the main theoretical concepts used in this paper and provides an overview of the most prominent simple cycle enumeration algorithms.
The notation used is given in Table~\ref{tab:notation}.

\subsection{Preliminaries}
\label{sect:back_prelim}

\begin{table}[t]
\centering
\caption{Summary of the notation used in the paper.
\vspace{-.1in}
}
\addtolength{\tabcolsep}{-2pt}
\begin{tabular}{l|l}
	\textbf{Symbol}            &    \textbf{Description}       \\ \hline
	\textbf{$u \rightarrow v$} & A directed edge connecting vertex $u$ with $v$. \\
    \textbf{$n$, $e$}    & Number of vertices and edges in a graph.    \\
    \textbf{$\delta$} & Size of a time window. \\
	\textbf{$c$}       & Number of simple cycles in a graph. \\
    \textbf{$s$} & Number of maximal simple paths in a graph. \\
    \textbf{$\Pi$}  & The current simple path explored by an algorithm. \\
    \textbf{$\mathit{Blk}$} & The set of blocked vertices. \\
    \textbf{$\mathit{Blist}$} & The unblock list data structure. \\
	\textbf{$p$}       &  Number of threads used for parallel algorithms. \\
	\textbf{$T_{p}(n)$}       & Execution time of a parallel algorithm. \\ 
	\textbf{$W_{p}(n)$}       & Amount of work a parallel algorithm performs. \\ 
	\hline
\end{tabular}
\label{tab:notation}
\vspace{-.1in}
\end{table}

We consider a directed graph $\mathcal{G}(\mathcal{V}, \mathcal{E})$ having a set of vertices $\mathcal{V}$ and a set of directed edges $\mathcal{E} = \{ u \rightarrow v \mid u, v \in \mathcal{V}\}$.
The set of neighbors of a given vertex $v$ is defined as $\mathcal{N}(v) = \{ w \mid \forall \; v \rightarrow w \in \mathcal{E}\}$.
An outgoing edge of a given vertex $v$ is defined as $v \rightarrow w$ and an incoming edge is defined as $u \rightarrow v$, where $v \rightarrow w, u \rightarrow v  \in \mathcal{E}$.
A \emph{path} between the vertices $v_0$ and $v_k$, denoted as $v_0 \rightarrow v_1 \ldots \rightarrow v_k$, is a sequence of vertices such that there exists an edge between two consecutive vertices of the sequence.
A \emph{simple path} is a path with no repeated vertices.
A simple path is \emph{maximal} if the last vertex of the path has no neighbors or all of its neighbors are already in the path~\cite{erdos_maximal_1959}.
A \textbf{cycle} is a path of non-zero length from a vertex $v$ to the same vertex $v$.
A \textbf{simple cycle} is a cycle with no repeated vertices except for the first and the last vertex.
The number of maximal simple paths and the number of simple cycles in a graph are denoted as $s$ and $c$, respectively (see Table~\ref{tab:notation}).
Note that $s$ can be exponentially larger than $c$~\cite{tarjan_enumeration_1973}.
The goal of \textbf{simple cycle enumeration} is to compute all simple cycles of a directed graph $\mathcal{G}$, ideally without computing all maximal simple paths of it.

A \textbf{temporal graph} is a graph that has its edges annotated with timestamps.~\cite{paranjape_motifs_2017}.
In temporal graphs, a \textbf{temporal cycle} is a simple cycle, in which the edges appear in the increasing order of their timestamps.
A simple cycle or a temporal cycle of a temporal graph occurs within a \textbf{time window} $\left[t_{w1}: t_{w2} \right]$ if every edge of that cycle has a timestamp $t_s$ such that $t_{w1} \leq t_s \leq t_{w2}$.
Figure~\ref{fig:time-window} shows the simple cycles of a temporal graph that occur within two different time windows of size $\delta = 5$.
This graph contains one simple cycle in the time window $\left[2: 7\right]$ (Figure~\ref{fig:time-window-a}), which is also a temporal cycle, and two simple cycles in the time window $\left[10: 15\right]$ (Figure~\ref{fig:time-window-b}).

\begin{figure}[t]
	\begin{subfigure}[t]{0.5\linewidth}
		\centering
		\includegraphics[width=0.8\linewidth]{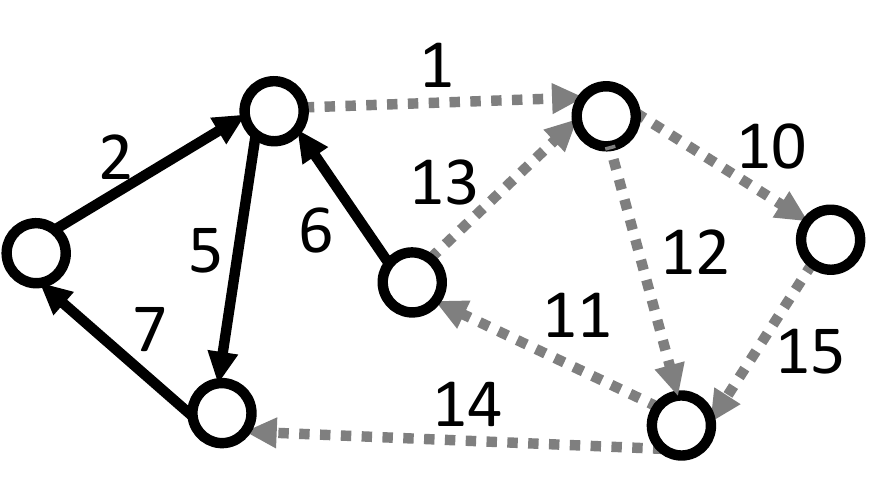}
	\vspace{-.05in}
		\caption{Time window $\left[2:7\right]$}
		\label{fig:time-window-a}
	\end{subfigure}%
	\begin{subfigure}[t]{0.5\linewidth}
		\centering
		\includegraphics[width=0.8\linewidth]{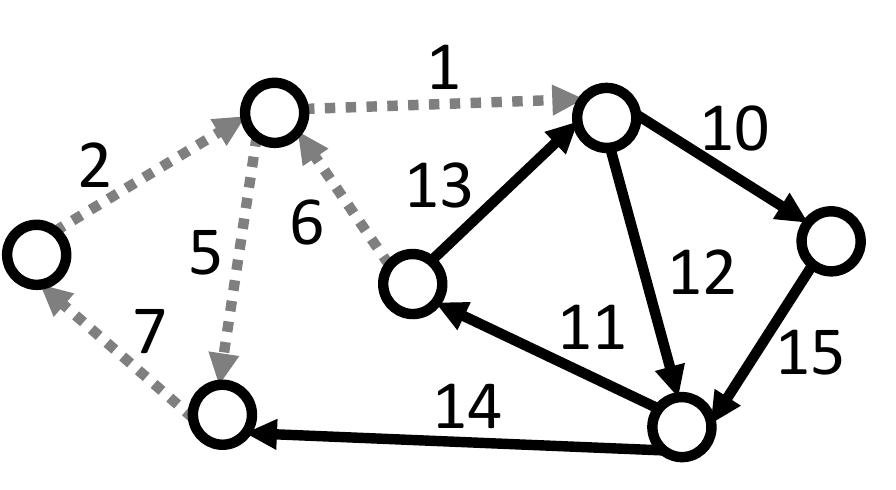}
	\vspace{-.05in}
		\caption{Time window $\left[10:15\right]$}
		\label{fig:time-window-b}
	\end{subfigure}
	\vspace{-.15in}
	\caption{Two snapshots of a temporal graph associated with two different time windows of size $\mathbf{\delta = 5}$. The solid arrows indicate the edges that belong to the respective time windows.}
	\vspace{-.15in}
	\label{fig:time-window}
\end{figure}

\subsection{Task-level parallelism}
The parallel algorithms described in this paper can be implemented using shared-memory parallel processing frameworks, such as TBB~\cite{kukanov_foundations_2007}, Cilk~\cite{blumofe_cilk_1996}, and OpenMP~\cite{quinn_parallel_2004}.
These frameworks enable decomposition of a program into tasks that can be independently executed by different software threads.
In our setup, tasks are dynamically created and scheduled. A \emph{parent} task can create several \emph{child} tasks.
A dynamic task management system assigns the tasks created to the work queues of available threads.
Furthermore, a work-stealing scheduler~\cite{blumofe_scheduling_1999, kukanov_foundations_2007, blumofe_cilk_1996} enables a thread that is not executing a task to \emph{steal} a task from the work queue of another thread.
Stealing tasks enables dynamic load balancing and ensures full utilisation of the threads when there are sufficiently many tasks.

\subsection{Work efficiency and scalability}
\label{sect:back_eff}

We use the notions of \emph{work efficiency} and \emph{scalability} to analyse parallel algorithms~\cite{par_algos}.
We refer to the time to execute a parallel algorithm on a problem of size $n$ using $p$ threads as $T_p(n)$. The size of a graph is determined by  the number of vertices $n$ as well as the number of edges $e$, but we will refer only to $n$ for simplicity.
The \emph{depth} of an algorithm is the length of the longest sequence of dependent operations in the algorithm. The time it takes to execute such a sequence is equal to the execution time of the parallel algorithm using an infinite number of threads, denoted by $T_{\infty}$.
In addition, the \emph{work} performed by a parallel algorithm that uses $p$ threads is the sum of the execution times of the individual threads.
The \emph{work efficiency} and the \emph{scalability} are formally defined as follows.

\begin{definition}
\label{def:workEfficiency}
(\textit{Work efficiency}) 
A parallel algorithm is work efficient if and only if $W_p(n) \in O(T_1(n))$.
\end{definition}

\begin{definition}
\label{def:scalability}
(\textit{Scalability}) 
A parallel algorithm is scalable if and only if $\lim\limits_{n\to\infty} \left(\lim\limits_{p\to\infty} \dfrac{T_p(n)}{T_1(n)} \right) = 0$.
\end{definition}

Informally, a work efficient parallel algorithm performs no more work than its serial version.
Moreover, scalability implies that, for large enough inputs, increasing the number of threads increases the speedup of the parallel algorithm with respect to its serial version.

We also define the notion of \emph{strong scalability} as follows~\cite{JaJa1992-va}.

\begin{definition}
\label{def:strongScalability}
(\textit{Strong scalability}) 
A parallel algorithm is strongly scalable if and only if $\dfrac{T_1(n)}{T_p(n)} = \Theta(p)$ for large enough $n$.
\vspace{-.03in}
\end{definition}

Whereas Definition~\ref{def:scalability} implies that the speedup $T_1(n)/T_p(n)$ achieved by a parallel algorithm with respect to its serial execution is infinite when the number of threads $p$ is infinite, Definition~\ref{def:strongScalability} implies that the speedup is always in the order of $p$.
Another related concept is weak scalability, which requires the speedup to be in the order of $p$ when the input size per thread is constant.
Note that both strong scalability and weak scalability  guarantee scalability.

\subsection{Simple cycle enumeration algorithms}

The following algorithms for simple cycle enumeration perform recursive searches to incrementally update simple paths that can lead to cycles.
Each algorithm iterates the vertices or edges of the graph and independently constructs a recursion tree to enumerate all the cycles starting from that vertex or edge.
The difference between these algorithms is to what extent they reduce the redundant work performed during the recursive search, which we discuss next.

\label{section:back_tiernan}

\textbf{The Tiernan algorithm}~\cite{tiernan_efficient_1970} enumerates simple cycles using a brute-force search.
It recursively extends a simple path $\Pi$ by appending a neighbor $u$ of the last vertex $v$ of $\Pi$ provided that  $u$ is not already in $\Pi$.
A clear downside of this algorithm is that it can repeatedly visit vertices that can never lead to a cycle.
When searching for cycles in the graph shown in Figure~\ref{fig:background}a starting from vertex $v_0$, this algorithm would explore the path $b_{1}, \ldots, b_{k}$ $2m$ times.
From each vertex $w_i$ and $u_i$, with $i \in \{1, \ldots, m\}$, the Tiernan algorithm would explore this path only to discover that it cannot lead to a simple cycle. 
As noted by Tarjan~\cite{tarjan_enumeration_1973}, the Tiernan algorithm explores every simple path and, consequently, all maximal simple paths of a graph.
Exploring a maximal simple path takes $O(n+e)$ time because a path can contain up to $n$ vertices, and the Tiernan algorithm explores every outgoing edge of every vertex in that path.
Given a graph with $s$ maximal simple paths (see Table~\ref{tab:notation}), the worst-case time complexity of the Tiernan algorithm is $O(s(n+e))$.

\label{section:back_johnson}

\textbf{The Johnson algorithm}~\cite{johnson_finding_1975} improves upon the Tiernan algorithm by avoiding the vertices that cannot lead to simple cycles.
For this purpose, the Johnson algorithm maintains a set of blocked vertices $\mathit{Blk}$ that are avoided during the search.
In addition, a list of vertices $\mathit{Blist}[w]$ is stored for each vertex $w$.
Whenever a vertex $w$ is unblocked (i.e., removed from $\mathit{Blk}$) by the Johnson algorithm, the vertices in $\mathit{Blist}[w]$ are also unblocked. 
This unblocking process is performed recursively until no more vertices can be unblocked, which we refer to as the \emph{recursive unblocking} procedure.

A vertex $v$ is blocked (i.e. added to $\mathit{Blk}$) when visited by the algorithm.
If a cycle is found after recursively exploring every neighbor of $v$ that is not blocked, the vertex $v$ is unblocked upon backtracking.
Otherwise, if no cycles are found by exploring the neighbors of $v$, $v$ is not unblocked immediately upon backtracking. The $\mathit{Blist}$ data structure is updated to enable unblocking of $v$ in a later step by adding $v$ to the list $\mathit{Blist}[w]$ of every neighbor $w$ of $v$.
This delayed unblocking of vertices enables the Johnson algorithm to discover each cycle in $O(n+e)$ time in the worst case~\cite{johnson_finding_1975}.
Because this algorithm also requires $O(n+e)$ time to determine that there are no cycles, its worst-case time complexity is $O\left((n+e)(c+1)\right)$.
Note that because $s$ can be exponentially larger than $c$~\cite{tarjan_enumeration_1973}, the Johnson algorithm is asymptotically faster than the Tiernan algorithm.

In the example shown in Figure~\ref{fig:background}a, every simple path $\Pi$ that starts from $v_0$ and goes through $b_1,\ldots,b_k$ vertices is a maximal simple path, and thus, it cannot lead to a simple cycle. 
The Johnson algorithm would block $b_1,\ldots,b_k$ immediately after visiting this sequence once and then keep these vertices blocked until it backtracks to $v_1$, at which point, the algorithm would have finished exploring both subtrees shown in Figure~\ref{fig:background}b.
As a result, the Johnson algorithm visits $b_1,\ldots,b_k$ vertices only once, rather than $2m$ times the Tiernan algorithm would visit them.
Note that because these vertices get blocked during the exploration of the left subtree of the recursion tree, they are not going to be visited again during the exploration of the right subtree. Effectively, a portion of the right subtree is pruned (see the dotted path in Figure~\ref{fig:background}b) based on the updates made on $\mathit{Blist}$ during the exploration of the left subtree.
This strictly sequential depth-first exploration of the recursion tree is critically important for achieving a high pruning efficiency, but it also makes scalable parallelisation of the Johnson algorithm extremely challenging, which
we are going to cover in Section~\ref{sect:tpJohnson}.

\begin{figure}[t]
	\centerline{
		\includegraphics[width=0.96\linewidth]{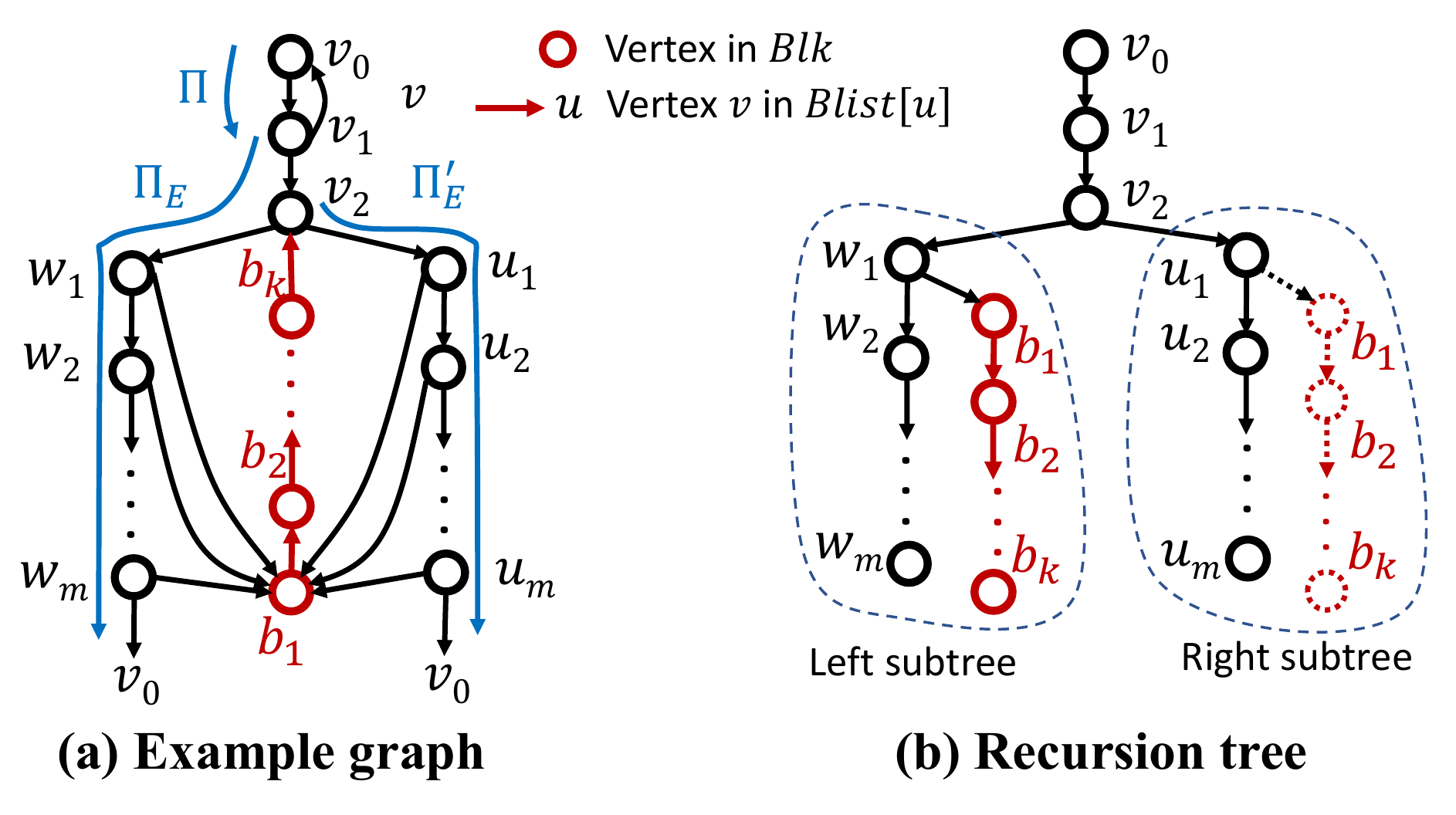}
	}
	\vspace{-.15in}
	\caption{
	(a) An example graph, and (b) the recursion tree constructed when searching for cycles in (a) starting from vertex $\mathbf{v_0}$.
	The nodes of the recursion tree represent the recursive calls of the depth-first search.
	The dotted path of the right subtree is explored only by the Read-Tarjan algorithm. 
	}
	\label{fig:background}
	\vspace{-.15in}
\end{figure}

\label{sect:read_tarjan}

\textbf{The Read-Tarjan algorithm}~\cite{read_bounds_1975} also has a worst-case time complexity of $O\left((n+e)(c+1)\right)$.
This algorithm maintains a current path $\Pi$ between a starting vertex and a frontier vertex. 
A recursive call of this algorithm iterates the neighbors of the current frontier vertex and performs a depth-first search (DFS). Assume that $v_0$ is the starting vertex and $v_1$ is the frontier vertex of $\Pi$ (see Figure~\ref{fig:background}a). From each neighbor $y \in \{v_0, v_2\}$ of $v_1$, a DFS tries to find a path extension $\Pi_E$ back to $v_0$ that would form a simple cycle when appended to $\Pi$.
In the example shown in Figure~\ref{fig:background}a, the algorithm finds two path extensions, one indicated as $\Pi_E$ and one that consists of the edge $v_1 \rightarrow v_0$.
The algorithm then explores each path extension by iteratively appending the vertices from it to the path $\Pi$.  
For each vertex $x$ of a path extension added to $\Pi$, the algorithm also searches for an alternate path extension from that vertex $x$ to $v_0$ using a DFS.
In the example given in Figure~\ref{fig:background}a, the algorithm iterates through the vertices of the path extension $\Pi_E$ and finds an alternate path extension $\Pi_{E}^{\prime}$ from the neighbor $u_1$ of $v_2$.
If an alternate path extension is found, a child recursive call is invoked with the updated current path $\Pi$, which is $v_0 \rightarrow v_1 \rightarrow v_2$ in our example.
Otherwise, if all the vertices in $\Pi_E$ have already been added to the current path $\Pi$, $\Pi$ is reported as a simple cycle. 
In our example, the Read-Tarjan algorithm explores both $\Pi_E$ and $\Pi_{E}^{\prime}$ path extensions, and each one leads to the discovery of a cycle.

The Read-Tarjan algorithm also maintains a set of blocked vertices $\mathit{Blk}$ for recursion-tree pruning. However, differently from the Johnson algorithm, $\mathit{Blk}$ only keeps track of the vertices that cannot lead to new cycles when exploring the current path extension.
The vertices in $\mathit{Blk}$ are avoided while searching for additional path extensions that branch from the current path extension.
For instance, the left subtree of the recursion tree shown in Figure~\ref{fig:background}b demonstrates the exploration of the path extension $\Pi_E$ shown in Figure~\ref{fig:background}a.
During the exploration of $\Pi_E$, vertices $b_1, \ldots, b_k$ are added to $\mathit{Blk}$ immediately after visiting $w_1$, and they are not visited again while exploring $\Pi_E$.
However, when exploring another path extension $\Pi_{E}^{\prime}$ in the right subtree, the vertices $b_1, \ldots, b_k$ are visited once again (see the dotted path of the right subtree).
As a result, the Read-Tarjan algorithm visits $b_1, \ldots, b_k$ twice instead of just once.
As we are going to show in Section~\ref{sect:tpReadTarjan}, this drawback becomes an advantage when parallelising the Read-Tarjan algorithm because it enables independent exploration of different subtrees of the recursion tree.

\textbf{Time window constraints} can be supported trivially by all three algorithms covered.
Such constraints restrict the search for simple cycles to those that occur within a time window of a given size $\delta$.
When a search for cycles starting from an edge with a timestamp $t$ is invoked, these algorithms consider only the edges with timestamps that belong to the time window $\left[t: t + \delta \right]$.
In consequence, fewer vertices are visited during the search for cycles.

\section{Coarse-grained parallel methods}
\label{sect:vertEdgePar}

The most straightforward way of parallelising the Johnson and the Read-Tarjan algorithms is to search for cycles that start from different vertices or edges in parallel.
Each such search can then execute on a different thread and construct its own recursion tree.
Such a coarse-grained approach to parallelising the cycle enumeration algorithms is work efficient.
However, it is not scalable, which we prove in this section.

\begin{figure}[t]
	\begin{subfigure}[t]{0.32\linewidth}
		\centering
		\includegraphics[height=3.7cm]{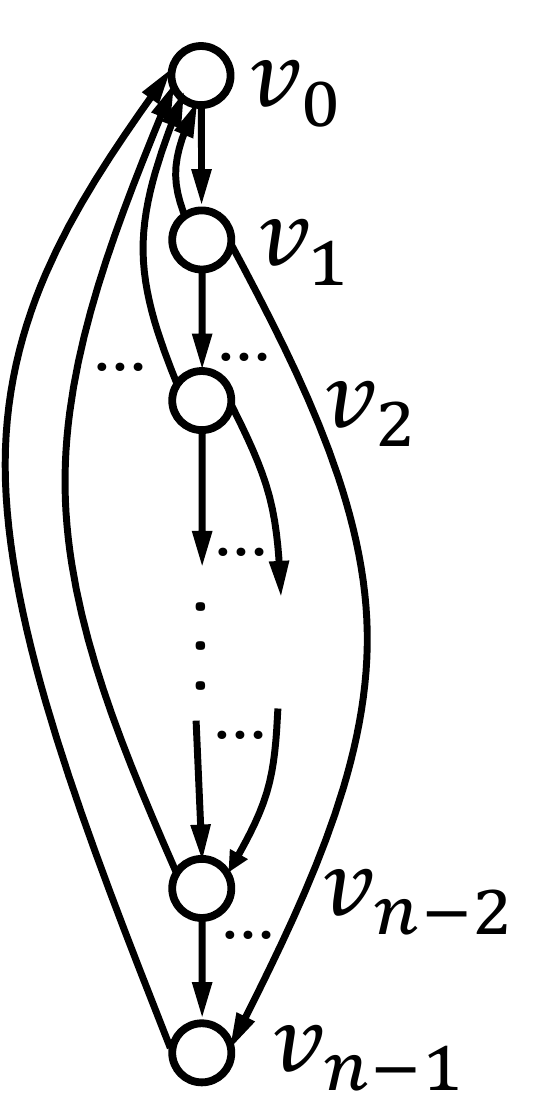}
	\vspace{-.05in}
		\caption{Example graph}
		\label{fig:wcEdgePar}
	\end{subfigure}%
	\begin{subfigure}[t]{0.58\linewidth}
		\centering
		\includegraphics[height=3.7cm]{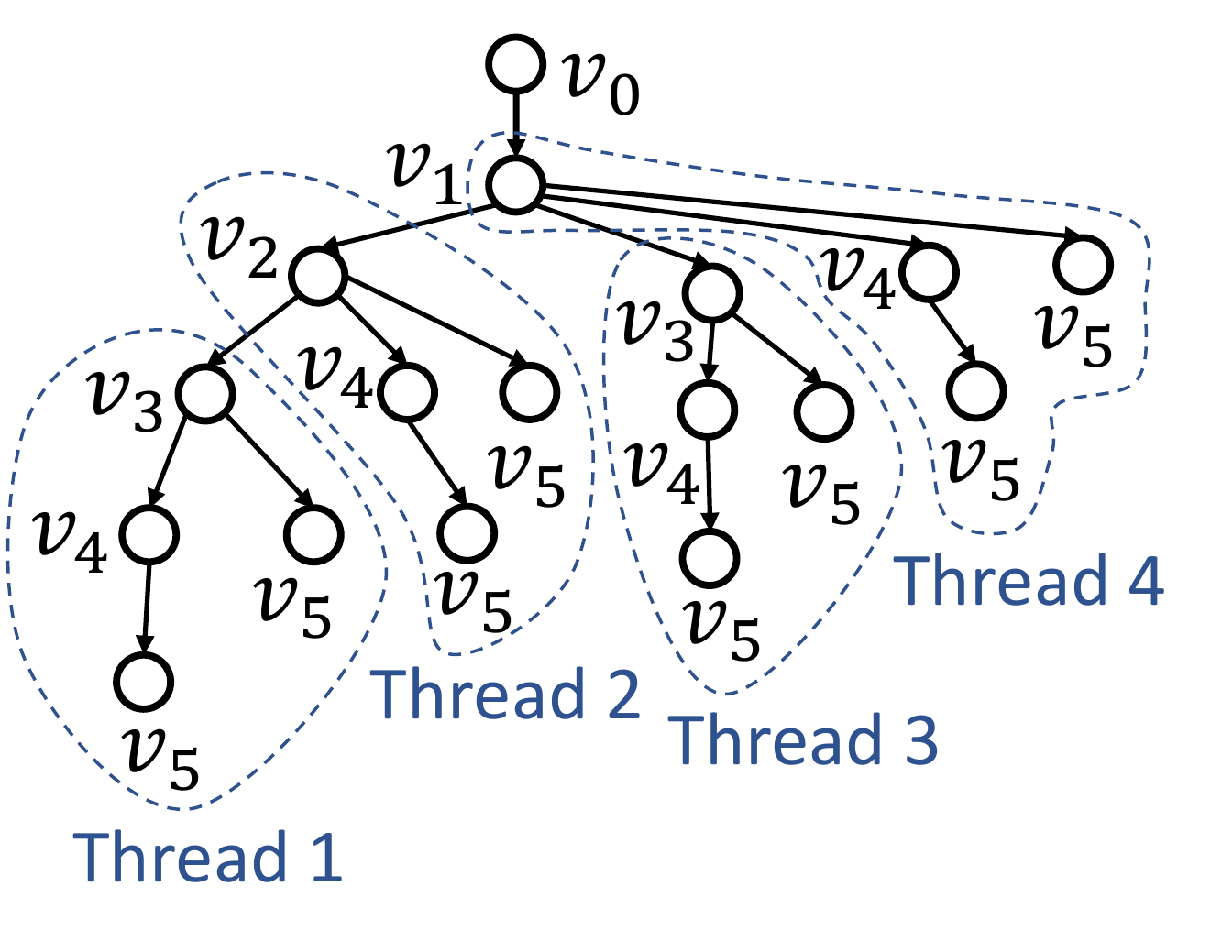}
	\vspace{-.05in}
		\caption{Recursion tree}
		\label{fig:wc_exampleBT}
	\end{subfigure}
	\vspace{-.1in}
	\caption{
	(a) A graph with an exponential number of simple cycles, all of which can be found by starting from the edge $\mathbf{v_0 \rightarrow v_1}$.
	(b) The recursion tree of the Johnson algorithm for $\mathbf{n=6}$ constructed when the algorithm starts from the edge $\mathbf{v_0 \rightarrow v_1}$. Whereas a coarse-grained parallel algorithm explores the complete recursion tree using a single thread, our fine-grained parallel algorithms can explore different regions of the search tree in parallel using several threads. 
	}
	\vspace{-.15in}
	\label{fig:wc_example}
\end{figure}

\vspace{-.03in}

\begin{proposition}
\label{theorem:cg_workEff}
The coarse-grained parallel Johnson and Read-Tarjan algorithms are work efficient.
\end{proposition}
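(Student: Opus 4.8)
The plan is to show that the coarse-grained scheme merely \emph{reorganizes} the serial computation across threads rather than \emph{duplicating} any of it, so that its total work matches the serial running time up to a lower-order scheduling term; work efficiency then follows directly from Definition~\ref{def:workEfficiency}.

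First I would make the structure of the serial algorithm explicit. Both the Johnson and the Read--Tarjan algorithm iterate over a set $S$ of candidate starting vertices (or, under a time-window constraint, starting edges), and for each $x \in S$ they run an independent subcomputation: they build the relevant (filtered) subgraph, initialize the private bookkeeping---$\mathit{Blk}$ and $\mathit{Blist}$ for Johnson, $\mathit{Blk}$ for Read--Tarjan---and the current path $\Pi$, and then explore the recursion tree rooted at $x$. Letting $\sigma(x)$ be the time the serial algorithm spends on the subcomputation for $x$, we have $T_1(n) = \Theta\big((n+e) + \sum_{x \in S} \sigma(x)\big)$; note also that $T_1(n) = \Omega(n+e)$ just to read the graph and that $|S| = O(n+e)$.

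Next I would argue that the coarse-grained parallel algorithm spawns exactly one task per $x \in S$, the body of which is precisely the serial subcomputation for $x$. The only state touched by more than one task is the immutable graph and the output container, to which every discovered cycle is appended in $O(1)$ time and charged to that cycle; in particular $\mathit{Blk}$, $\mathit{Blist}$, and $\Pi$ are thread-private. Hence executing the tasks concurrently cannot alter the control flow of any individual subcomputation, so the total useful work is exactly $\sum_{x \in S} \sigma(x)$, regardless of $p$. Adding the cost of task creation and dynamic load balancing---$O(1)$ amortized per spawned task and per successful steal under a standard work-stealing scheduler, hence $O(|S|) = O(n+e)$ in total---gives $W_p(n) = \sum_{x \in S} \sigma(x) + O(n+e)$, and both summands are $O(T_1(n))$, so $W_p(n) \in O(T_1(n))$. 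The Read--Tarjan case is identical, since it too discards its pruning state between starting points.

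The only delicate point---more a matter of care than of genuine difficulty---is the accounting for ``shared'' state and scheduler cost: one must verify that the coarse-grained decomposition truly keeps all traversal state thread-private (so that concurrency introduces neither contention nor repeated re-traversal of any subgraph) and that the work-stealing overhead scales with the number of tasks rather than with, say, the depth of the individual recursion trees. Once both are pinned down, work efficiency is immediate. Observe that the argument says nothing about load balance: a single dominant start $x^\star$ with $\sigma(x^\star) = \Theta\big(\sum_{x \in S} \sigma(x)\big)$ makes one task as costly as the whole computation, which is precisely why this scheme is \emph{not} scalable---the subject of the next result.
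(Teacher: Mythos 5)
Your proof is correct. The paper in fact omits the proof of Proposition~\ref{theorem:cg_workEff} entirely, stating only that it is trivial; your argument---each coarse-grained task is exactly the serial per-start subcomputation with thread-private $\Pi$, $\mathit{Blk}$, and $\mathit{Blist}$, so the total work equals the serial work plus an $O(n+e)$ scheduling overhead that is dominated by $T_1(n)$---is precisely the fleshed-out version of the reasoning the paper leaves implicit, and nothing more is needed.
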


The proof of Proposition~\ref{theorem:cg_workEff} is trivial, and we omit it for brevity.

\begin{theorem}
\label{theorem:cg_scalability}
The coarse-grained parallel Johnson and Read-Tarjan algorithms are not scalable.
\end{theorem}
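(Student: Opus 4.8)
The plan is to disprove scalability by exhibiting a single infinite family of inputs on which the speedup stays bounded away from zero. By Definition~\ref{def:scalability} it suffices to produce graphs $\mathcal{G}_n$, one for each size $n$, together with a constant $\alpha > 0$, such that $T_p(n)/T_1(n) \ge \alpha$ for \emph{every} number of threads $p$; then the inner limit over $p$ is at least $\alpha$, the outer limit over $n$ is at least $\alpha$, and the limit in Definition~\ref{def:scalability} cannot be $0$. I would take $\mathcal{G}_n$ to be the family depicted in Figure~\ref{fig:wc_example}: a graph with $n$ vertices and $\Theta(n)$ edges, assembled from a chain of $\Theta(n)$ small branching gadgets, so that it contains $c = 2^{\Theta(n)}$ simple cycles, every one of which uses the edge $v_0 \rightarrow v_1$ and hence passes through the vertex $v_0$.

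Next I would pin down the serial running time. Let $g(n)$ be the cost of the single recursion-tree exploration that the Johnson (or Read-Tarjan) algorithm performs when it starts the search from $v_0 \rightarrow v_1$. Since that one search already reports all $c$ cycles and each cycle is reported in $\Theta(n+e)$ time, $g(n) = 2^{\Theta(n)}$. Every other search --- from any vertex other than $v_0$, or any edge other than $v_0 \rightarrow v_1$ --- operates on a subgraph that no longer contains $v_0$ and is therefore acyclic, so by the pruning rules of either algorithm it terminates in $O(n+e)$ time; there are only $O(n)$ such searches. Hence $T_1(n) = g(n) + O(n(n+e)) = \Theta(g(n))$.

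Then I would lower-bound $T_p(n)$. In the coarse-grained scheme the unit of parallel work is an entire per-vertex (or per-edge) search, which is a strictly sequential depth-first traversal and is never subdivided across threads. The task that enumerates all cycles through $v_0 \rightarrow v_1$ is therefore executed in its entirety by one thread, so $T_p(n) \ge g(n)$ for every $p$, and in particular $T_\infty(n) = \Omega(g(n))$. Combining the two estimates, $T_p(n)/T_1(n) \ge g(n)/\Theta(g(n)) = \Omega(1)$ for all $n$ and all $p$, which supplies the required constant $\alpha$ and contradicts Definition~\ref{def:scalability}. The argument is identical for the two algorithms: the only properties used are that each explores one recursion tree per starting vertex/edge by DFS, and that on $\mathcal{G}_n$ exactly one of those trees is large.

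The easy part is the exponential lower bound $T_p(n) \ge g(n)$. The part that needs care --- and the one I expect to be the main obstacle --- is establishing $T_1(n) = \Theta(g(n))$ rather than merely $T_1(n) = O(\mathrm{poly}(n)\cdot g(n))$: a polynomial gap between $T_1$ and $g(n)$ would let the ratio decay like $1/\mathrm{poly}(n)$ and the whole argument would collapse. This is exactly why the construction must force $\mathcal{G}_n \setminus \{v_0\}$ to be acyclic, so that the remaining searches are genuinely cheap and $g(n)$ dominates $T_1(n)$. One should also verify this for the Read-Tarjan recursion tree, which on the same graph may be a constant (or small-polynomial) factor bulkier than Johnson's but is still exponential and still monolithic, so the conclusion carries over unchanged.
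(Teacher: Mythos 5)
Your proposal is correct and follows essentially the same route as the paper: the same worst-case family (the graph of Figure~\ref{fig:wc_example}, where all $c=2^{\Theta(n)}$ cycles pass through $v_0 \rightarrow v_1$), combined with the observation that in the coarse-grained scheme this single monolithic search is executed by one thread, so the speedup ratio stays bounded away from zero. The only cosmetic difference is that you explicitly bound the cost of the remaining searches (acyclic after removing $v_0$, hence $O(n+e)$ each) to get $T_1(n)=\Theta(g(n))$, whereas the paper shortcuts this by noting that both algorithms detect $v_0$ as the only cycle-starting vertex, so the lone thread performs exactly the sequential work and $T_\infty(n)=T_1(n)$.
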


\begin{proof}
In this case, $T_{\infty}(n)$ represents the worst-case execution time of a search for cycles that starts from a single vertex or edge, and it depends on the number of cycles discovered during this search.
In the worst case, a single recursive search can discover all cycles of a graph.
An example of such graph is given in Figure~\ref{fig:wcEdgePar}, where each vertex $v_i$, with $i\in\{1,\ldots,n-1\}$, is connected to $v_0$ and to every vertex $v_j$ such that $j > i$.
In that graph, any subset of vertices $v_2,\ldots, v_{n-1}$ defines a different cycle.
Therefore, the total number of cycles in this graph is equal to the number of all such subsets $c = 2^{n-2}$.
Before the search for cycles, both the Johnson and the Read-Tarjan algorithm find all vertices that start a cycle, which is only $v_0$ in this case.
Therefore, the search for cycles will be performed only by one thread.
Because all cycles of the graph are discovered by a single thread, this thread performs all the work the sequential algorithm would perform, which leads to $T_{\infty}(n) = T_{1}(n)$.
Because it holds that $\lim\limits_{n\to\infty} T_{\infty}(n)/T_1(n) = 1$, the coarse-grained algorithms are not scalable based on Definition~\ref{def:scalability}.
\vspace{-.07in}
\end{proof}

\textbf{Summary.} Theorem~\ref{theorem:cg_scalability} shows that the main drawback of the coarse-grained parallel algorithms is their limited scalability.
This limitation is apparent for the graph shown in Figure~\ref{fig:wcEdgePar}, which has an exponential number of cycles in $n$. 
When using a coarse-grained parallel algorithm on this graph, all the cycles will be discovered by a single thread.
Because only one thread can be effectively utilised, increasing the number of threads will not result in a reduction of the overall execution time of the coarse-grained parallel algorithm. Figure~\ref{fig:motivfig} shows the load imbalance exhibited by the coarse-grained parallel algorithms in practice. Section~\ref{sect:experiments} demonstrates the limited scalability of coarse-grained parallel algorithms in further detail.
   
\section{Fine-grained parallel Johnson}
\label{sect:tpJohnson}

To address the load imbalance issues that manifest themselves in the coarse-grained parallel Johnson algorithm, we introduce the fine-grained parallel Johnson algorithm.
The main goal of our fine-grained algorithm is to enable several threads to explore a recursion tree concurrently as shown in Figure~\ref{fig:wc_exampleBT}, where each thread executes a subset of the recursive calls of this tree.
However, enabling several threads to explore the recursion tree concurrently is in conflict with the sequential depth-first exploration required by the Johnson algorithm to achieve a high pruning efficiency. 

\begin{figure}[t]
	\begin{subfigure}[t]{0.3\linewidth}
		\centering
        \includegraphics[height=5cm]{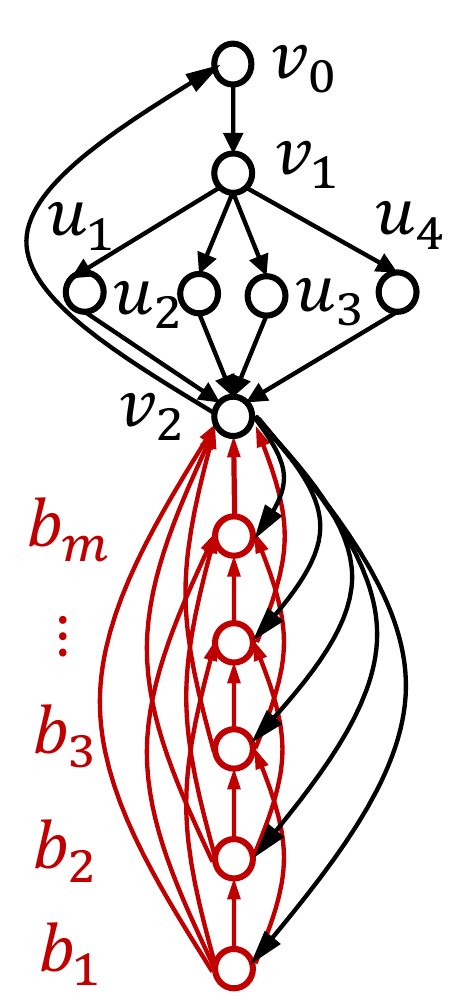}
	\vspace{-.05in}
		\caption{Example graph}
		\label{fig:fgj_exampleGraph}
	\end{subfigure}%
	\begin{subfigure}[t]{0.7\linewidth}
		\centering
        \includegraphics[height=5cm]{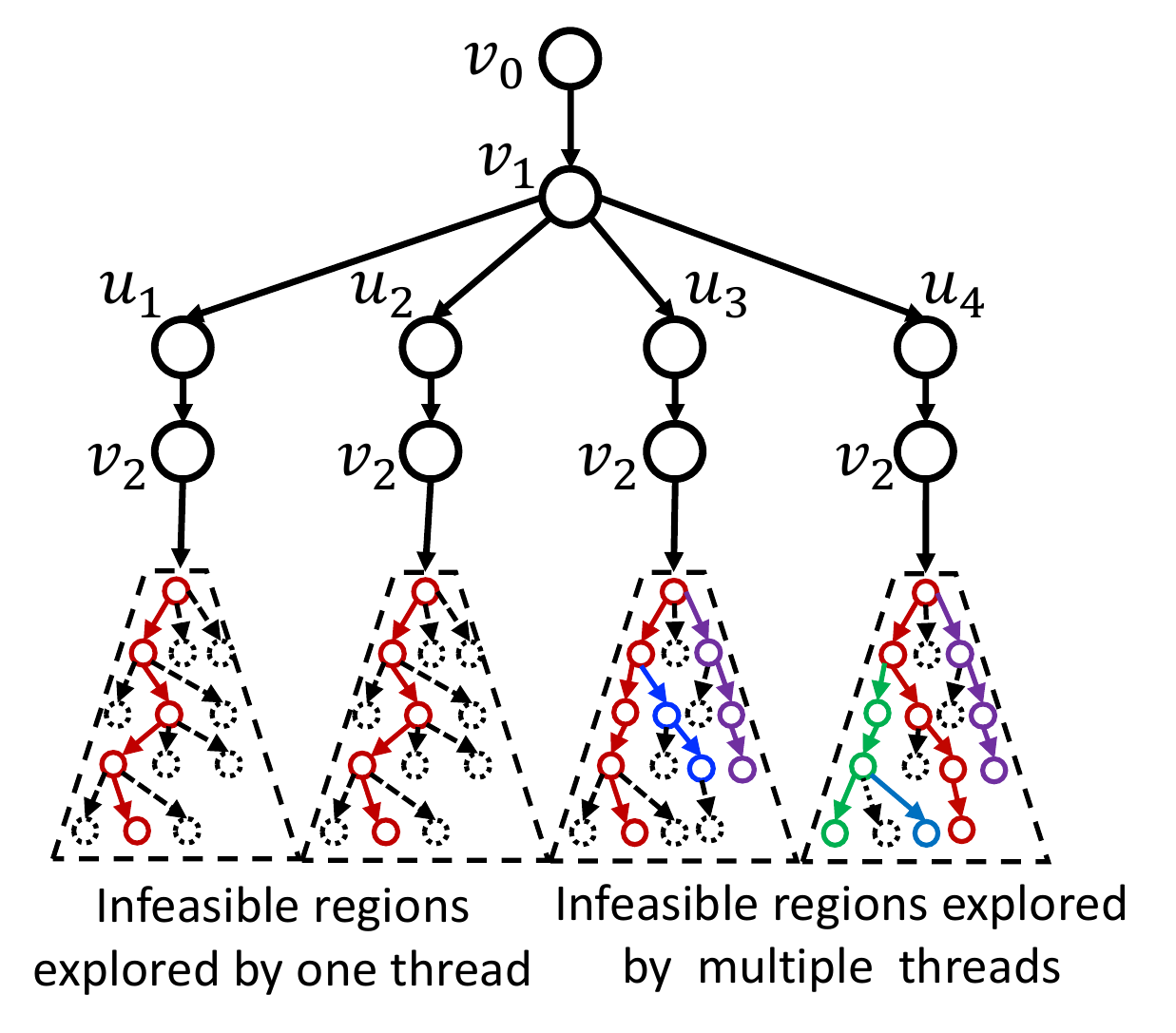}
	\vspace{-.05in}
		\caption{Recursion tree}
		\label{fig:fgj_exampleBT}
	\end{subfigure}
	\vspace{-.1in}
	\caption{
	(a) An example graph and (b) the recursion tree of our fine-grained parallel Johnson algorithm when searching for cycles that start from vertex $\mathbf{v_0}$, which
	can redundantly traverse the vertices $\mathbf{b_1,\ldots,b_m}$ several times.
	The serial Johnson algorithm would traverse these vertices only once.}
	\vspace{-.15in}
	\label{fig:fgj_example}
\end{figure}

\textbf{The na\"ive approach.} A straightforward way of enabling concurrent exploration of the recursion tree is to break the dependencies between different paths of the recursion tree by creating new copies of the $\mathit{Blk}$ and $\mathit{Blist}$ data structures when invoking child recursive calls.
In such a case, there would be no data sharing between different recursive calls, and different calls would store inconsistent copies of the data structures.
As a result, a recursive call would be unaware of the vertices visited and blocked by other calls that precede it in the depth-first order except for its direct ancestors in the recursion tree.
Hence, this approach exhaustively explores all maximal simple paths in the graph, and is identical to the brute-force solution of Tiernan (see Section~\ref{section:back_tiernan}).
When enumerating the simple cycles of the graph shown in Figure~\ref{fig:fgj_exampleGraph} starting from $v_0$, this approach would explore all $4 \times 2^{m-1}$ maximal simple paths instead of just four that would be visited by the Johnson algorithm.

\label{sect:tpj_copyOnSteal}

\textbf{Our approach.} To enable different threads to concurrently explore the recursion tree in a depth-first fashion while also taking advantage of the powerful pruning capabilities of the Johnson algorithm, each thread executing our fine-grained parallel Johnson algorithm maintains its own copy of the $\mathit{\Pi}$, $\mathit{Blk}$, and $\mathit{Blist}$ data structures.
Because a thread maintains a copy of the blocked vertex set $\mathit{Blk}$, it will not fruitlessly visit the vertices that it has already blocked.
Yet, different threads will store inconsistent copies of the $\mathit{Blk}$ and $\mathit{Blist}$ data structures, which could lead to some redundant work.
This redundant work could happen when different threads are exploring the same infeasible region as depicted in Figure~\ref{fig:fgj_exampleBT}.
However, because the threads executing our fine-grained parallel algorithm still take advantage of the powerful pruning methods of the Johnson algorithm, the amount of work performed will be significantly lower than that of the brute-force Tiernan algorithm.

\textbf{Copy-on-steal.} Our fine-grained parallel Johnson algorithm implements each recursive call as a separate task.
If a child task and its parent task are executed by the same thread, the child task reuses the $\mathit{\Pi}$, $\mathit{Blk}$, and $\mathit{Blist}$ data structures of the parent task.
However, if a child task has been stolen---i.e., it is executed by a thread other than the thread that created it, a new copy of these data structures are allocated by the child task.
In this way, each thread of our fine-grained parallel algorithm maintains its own copy of the $\mathit{\Pi}$, $\mathit{Blk}$, and $\mathit{Blist}$ data structures.
We refer to this mechanism as \emph{copy-on-steal}.

The problem with copying data structures between different threads upon task stealing is that the thread that has created the stolen task can modify its data structures before another thread steals this task. This inconsistency has to be somehow managed. 
A straightforward solution to this problem is to execute the stolen task after restoring the data structures to the state they were in when the stolen task was created. 
Even though such an approach would guarantee correct execution, the stealing thread would not be able to reuse the blocked vertices that have been discovered between the time the task was created and the time the task was stolen.
For example, in Figure~\ref{fig:cos_example}, assume that we are searching for simple cycles that start from $v_0$. While visiting $v_1$, the thread $T_1$ creates two new tasks, continues its depth-first search for simple cycles from vertex $w_1$ using the first task it has created, and pushes the second task it has created into its work queue to be continued from the vertex $u_1$.
Suppose that the thread $T_2$ steals this second task from $T_1$ while $T_1$ is visiting $w_3$.
At this point, $T_1$ has blocked the vertices $b_1$, $b_2$, $b_3$, and $b_4$ because it has discovered that it is not possible to construct a simple cycle that ends in $v_0$ while going through $b_1$, $b_2$, $b_3$, and $b_4$, once $v_1$ and $w_1$ have been visited.
If $T_2$ discards the changes $T_1$ has made to its data structures, it would still be able to discover the simple cycle that goes through $b_1$ and $b_2$ all the way to $v_0$.
However, $T_2$ would have to visit vertices $b_3$ and $b_4$ even though $T_1$ has already concluded that these vertices cannot lead to a simple cycle that ends in $v_0$, once $v_1$ has been visited.
Therefore, a na\"ive state restoration will lead to unnecessary work.

\begin{figure}[t]
	\centerline{
		\includegraphics[width=0.98\linewidth]{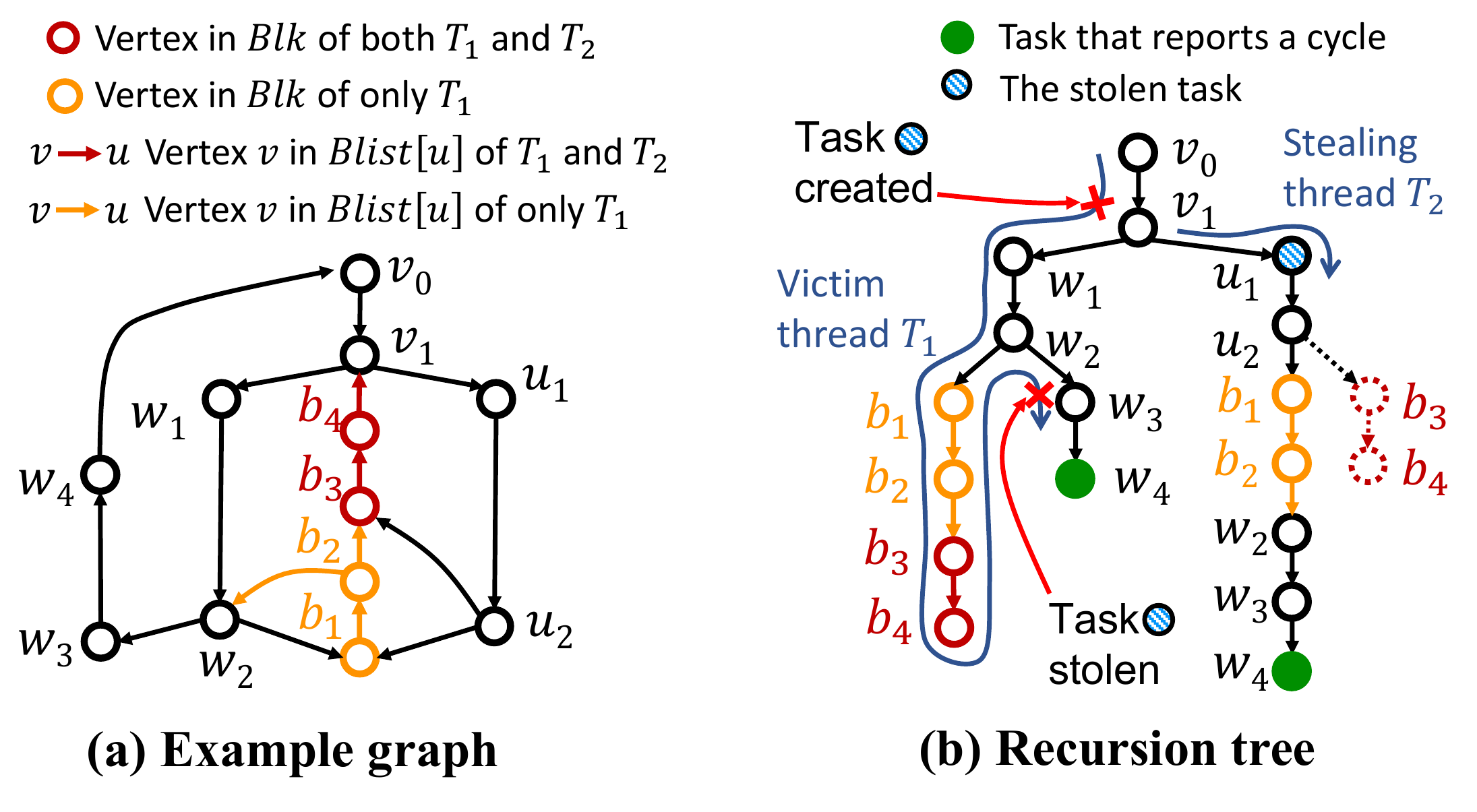}
	}
	\vspace{-.15in}
	\caption{
    (a) An example graph and (b) the recursion tree of our fine-grained parallel Johnson algorithm when searching for cycles that start from $\mathbf{v_0}$.
	The thread $\mathbf{T_2}$ can prune the dotted part of the tree by avoiding $\mathbf{b_3}$ and $\mathbf{b_4}$ that the thread $\mathbf{T_1}$ has blocked after creating the task stolen by $\mathbf{T_2}$.
	}
	\label{fig:cos_example}
	\vspace{-.1in}
\end{figure}

We have developed an alternative solution that makes it possible for the stealing thread to capitalise on the information already discovered by the thread from which it stole the task, henceforth referred to as the \emph{victim thread}.
The stealing thread first determines its initial path $\Pi_2$ by removing the vertices the victim thread has added to its path $\Pi_1$ since the stolen task was created.
The stealing thread then invokes the recursive unblocking procedure for each vertex $v \in \Pi_1 \setminus \Pi_2$, which removes these vertices from the set of blocked vertices $\mathit{Blk}$ and recursively unblocks the vertices from the blocked list $\mathit{Blist}[v]$ of each such $v$.
For instance, in Figure~\ref{fig:cos_example}, the stealing thread $T_2$ invokes recursive unblocking for vertices $w_3$, $w_2$, and $w_1$ because the victim thread $T_1$ has added them to its current path after the stolen task was created.
As a result of the recursive unblocking procedure, $T_2$ unblocks the vertices $b_1$ and $b_2$, but the vertices $b_3$ and $b_4$ remain blocked.
The vertices that remain blocked are the ones that cannot take part in a cycle that has $\Pi_2$ as a prefix because these vertices can only be unblocked by invoking the recursive unblocking procedure for the vertices in $\Pi_2$.
Therefore, some of the blocked vertices discovered by the victim thread can be avoided by the stealing thread, which would not have been possible if the stealing thread simply discarded the changes the victim thread has made to its $\mathit{Blk}$ and $\mathit{Blist}$ data structures. 

\textbf{Synchronization.} Without countermeasures, the copy-on-steal method can suffer from race conditions because the $\Pi$, $\mathit{Blk}$, and $\mathit{Blist}$ data structures can be concurrently accessed by a victim thread and several stealing threads.
For instance, if a stealing thread copies the data structures of a victim thread while the victim thread performs a recursive unblocking, the stealing thread could end up copying data that is not in a stable state.
To prevent such conditions, we define critical sections and implement coarse-grained locking by maintaining a mutex per thread.
The lock is acquired when the victim thread enters the recursive unblocking procedure or when the stealing threads attempt to copy the $\Pi$, $\mathit{Blk}$, and $\mathit{Blist}$ data structures from the victim thread. The lock is released when the recursive unblocking or the copy operation is complete.
In this way, the race conditions are eliminated and a correct execution is guaranteed. 

\textbf{Theoretical analysis.} We now show that the fine-grained parallel Johnson algorithm is not work efficient, but scalable.

\begin{theorem}
The fine-grained parallel Johnson algorithm is not work efficient.
\label{theorem:fgj_workEfficiency}
\vspace{-.07in}
\end{theorem}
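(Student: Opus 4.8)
The plan is to refute Definition~\ref{def:workEfficiency} directly: exhibit a family of graphs $\{\mathcal{G}_m\}$ with $n=\Theta(m)$ on which the serial Johnson algorithm runs in polynomial time, yet the fine-grained parallel algorithm admits a legal execution whose total work $W_p(n)$ is super-polynomial in $n$. The natural candidate is the graph sketched in Figure~\ref{fig:fgj_example}(a) (or a close variant): a branching gadget that can be entered through $2^{\Theta(m)}$ distinct simple paths, all of which funnel into the dead-end chain $b_1\to\cdots\to b_m$ from which no vertex reaches the start vertex $v_0$, together with one short genuine cycle through $v_0$ so that the search is non-trivial. The crucial structural property is that $b_1,\ldots,b_m$ lie on no simple cycle but sit at the bottom of exponentially many simple paths.

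First I would bound $T_1(n)$. On $\mathcal{G}_m$ the number of simple cycles is $c=O(1)$, and, as already noted in the discussion of Figure~\ref{fig:fgj_example}, the serial algorithm blocks $b_1,\ldots,b_m$ (and subsequently the whole branching gadget) after a single descent into the dead end, so it visits each of these vertices $O(1)$ times; by the $O((n+e)(c+1))$ bound from Section~\ref{section:back_johnson} this gives $T_1(n)=O(n+e)=\mathrm{poly}(n)$.

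Next I would describe an adversarial but scheduler-realizable execution of the fine-grained parallel algorithm. The recursion tree rooted at $v_0$ contains one task for each of the $2^{\Theta(m)}$ simple paths that enter the branching gadget, and each such task, once run, descends into $b_1,\ldots,b_m$. I would arrange the work-stealing schedule so that these tasks are stolen ``early'' --- before the victim thread has propagated the blocking of $b_1,\ldots,b_m$ into the copy inherited by the thief. The key step is to verify that copy-on-steal does not salvage the pruning here: when a task whose current path $\Pi_2$ stops short of the gadget is stolen, the thief either copies a $\mathit{Blk}$ in which $b_1,\ldots,b_m$ are not blocked, or, after recursively unblocking $\Pi_1\setminus\Pi_2$, has those vertices unblocked again, because in $\mathcal{G}_m$ they are retained in $\mathit{Blk}$ only through $\mathit{Blist}$ entries attached to vertices lying in $\Pi_1\setminus\Pi_2$. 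Hence each such task re-traverses $b_1,\ldots,b_m$ in full, and summing $\Omega(m)$ redundant work over $2^{\Theta(m)}$ tasks yields $W_p(n)=2^{\Omega(n)}$. Since $W_p(n)=2^{\Omega(n)}$ grows faster than the polynomial $T_1(n)$, we have $W_p(n)\notin O(T_1(n))$, so the algorithm is not work efficient.

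I expect the third step to be the real obstacle. The copy-on-steal plus recursive-unblocking mechanism is deliberately engineered to preserve the blocking information ``owed to'' the common prefix $\Pi_2$, so one must design $\mathcal{G}_m$ so that the blocking of the dead-end chain is owed entirely to the portion of the path that the thief discards, and one must exhibit a concrete schedule --- the creation order of tasks, and which thread steals which task and when --- that a work-stealing scheduler could actually produce and under which this bad case recurs exponentially often rather than being amortised away. The remaining bookkeeping (counting cycles, maximal simple paths, and path lengths in $\mathcal{G}_m$) is routine.
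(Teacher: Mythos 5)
Your strategy---an explicit witness family plus an adversarial, scheduler-realizable execution forcing an exponential work gap---is a genuinely different route from the paper's, and the step you yourself flag as ``the real obstacle'' is a real gap rather than routine bookkeeping. The paper never constructs a schedule at all: it invokes Johnson's Lemma~3 to bound the serial algorithm's visits per vertex by $c$, observes that in the fine-grained version each thread keeps private copies of $\mathit{Blk}$ and $\mathit{Blist}$ and is therefore unaware of the other threads' blocked vertices, and concludes that each vertex is visited at most $\min\{s,pc\}$ times, so $W_p(n)\in O(\min\{s,pc\}(n+e))$, which exceeds $T_1(n)\in O((c+1)(n+e))$ whenever $c>0$, $p>1$, $s>c$; the graph of Figure~\ref{fig:fgj_example} (with $c=4$ and $s=c\cdot 2^{m-1}$) is then used to show the redundancy is actually attained, because $p$ threads that enter the same infeasible region each re-traverse $b_1,\ldots,b_m$. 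Note that this only needs a factor $\min\{p,s/c\}$ of redundant work and one steal per participating thread---far weaker than the $2^{\Omega(n)}$ blow-up you are aiming for.

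Concretely, two things are missing from your plan. First, the crux you defer: you must actually verify that for a stolen task whose path $\Pi_2$ stops short of the gadget, copy-on-steal leaves $b_1,\ldots,b_m$ unblocked, or that the recursive unblocking of the vertices in $\Pi_1\setminus\Pi_2$ re-frees them; this hinges on which $\mathit{Blist}$ lists the $b_i$ have been placed on, i.e., on attributing their blocking entirely to vertices the thief discards, and the paper's own Figure~\ref{fig:cos_example} shows the mechanism is designed so that blocking owed to the retained prefix survives ($b_3,b_4$ stay blocked there), so this property must be engineered and proved for your $\mathcal{G}_m$, not assumed. Second, the quantitative target $W_p(n)=2^{\Omega(n)}$ with $c=O(1)$ is harder than needed and not justified: after a steal, the thief still runs Johnson-style pruning on its own copies, so each stolen subtree containing no cycle contributes only $O(n+e)$ work, and the total work is essentially bounded by the number of steal events times $O(n+e)$ (equivalently, by the paper's $\min\{s,pc\}(n+e)$). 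Reaching $2^{\Omega(n)}$ therefore requires either $p=2^{\Omega(n)}$ threads or exponentially many steals, and exponentially many stealable tasks only exist if the unpruned blow-up has already occurred---so you owe a bootstrap argument about how stolen, unpruned subtrees generate further stealable tasks, or you must let $p$ grow with $s/c$ and say so. The paper's per-thread counting argument sidesteps both difficulties, which is why it suffices; as written, your proposal does not yet constitute a proof.
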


\begin{proof}
According to Lemma~3 presented by Johnson~\cite{johnson_finding_1975}, a vertex cannot be unblocked more than once unless a cycle is found, and once a vertex is visited, it can be visited again only after being unblocked.
As a result, a vertex is visited and unblocked at most $c$ times by the Johnson algorithm.
In the fine-grained parallel Johnson algorithm executed using $p$ threads, each thread maintains a separate set of the data structures used for managing the blocked vertices ($\mathit{Blk}$ and $\mathit{Blist}$).
Because the threads are unaware of each-others blocked vertices, each vertex is visited at most $p c$ times, $c$ times by each thread.
Additionally, a vertex cannot be visited more than $s$ times because each maximal simple path of a graph is explored by a different thread in the worst case, and during each simple path exploration, a vertex is visited at most once. 
Therefore, the maximum number of times a vertex can be visited by the fine-grained parallel Johnson algorithm is $\min\left\{s, p c\right\}$.
When the Johnson algorithm visits a vertex, it also iterates through its outgoing edges, thus visiting all $n$ vertices executes in $O(n+e)$ time.
Prior to executing the recursive search, this algorithm checks if the input graph contains at least one cycle  using a single thread.
This check can be performed in $O(n+e)$ time.
As a result, the work performed by the fine-grained parallel Johnson algorithm is
\vspace{-.05in}
 \begin{equation}
 W_p(n) \in \begin{cases}
     O\left(n+e\right), & \text{if $c = 0$}, \\
     O\left(p c(n+e)\right), & \text{if $p < s/c$ and $c \neq 0$},\\
     O\left( s (n+e)\right), & \text{otherwise}.
   \end{cases}
 \end{equation} 
When $c > 0$, $p>1$, and $s>c$, the work performed by the fine-grained parallel Johnson algorithm $W_p(n)$ is greater than the execution time $T_1(n)$ of the sequential Johnson algorithm.  
Thus, the fine-grained parallel Johnson algorithm is not work efficient.
\vspace{-.05in}
\end{proof}

The work inefficiency of the fine-grained parallel Johnson algorithm occurs if more than one thread performs the work the sequential Johnson algorithm would perform between the discovery of two cycles.
We illustrate this behaviour using the graph from Figure~\ref{fig:fgj_exampleGraph}, which contains $c = 4$ cycles and $s = c \times 2^{m-1}$ maximal simple paths, each starting from vertex $v_0$.
When discovering each cycle, the fine-grained parallel Johnson algorithm explores an infeasible region of the recursion tree, as shown in Figure~\ref{fig:fgj_exampleBT}, in which vertices $b_1,\ldots,b_m$ are visited.
If this infeasible region is explored using a single thread, each vertex $b_i$, with $i\in \{1,\ldots,m\}$, will be visited exactly once.
However, if $p$ threads are exploring the same infeasible region of the recursion tree, vertices $b_1,\ldots,b_m$ will be visited up to $p$ times because the threads are unaware of each-others blocked vertices.
In this case, the fine-grained parallel Johnson algorithm performs more work than necessary, and, thus, it is not work efficient.
Additionally, each infeasible region of the recursion tree that visits vertices $b_1,\ldots,b_k$ can be executed by at most $s/c = 2^{m-1}$ threads because there are $2^{m-1}$ maximal simple paths that can be explored in each infeasible region.
In this case, each vertex $b_i$, with $i\in \{1,\ldots,m\}$, is visited up to $s$ times, and, thus, the fine-grained parallel Johnson algorithm behaves as the Tiernan algorithm (see Section~\ref{section:back_tiernan}).

\begin{lemma}
The depth $T_{\infty}(n)$ of the fine-grained parallel Johnson algorithm is in $O(n+e)$.
\vspace{-.07in}
\label{lemma:fgJ_tinf}
\end{lemma}

\begin{proof}
The worst-case depth of this algorithm occurs when a thread performs copy-on-steal and explores a path of length $n$.
Performing copy-on-steal requires $O(n+e)$ operations because at most $n$ vertices in $\Pi$ and $\mathit{Blk}$, and at most $e$ pairs of vertices in $\mathit{Blist}$ are accessed during the copy-on-steal.
Exploring the path of length $n$ requires $O(n+e)$ operations because a recursive call that visits a vertex $v$ of this path also iterates through every outgoing edge of $v$.
As a result, the depth of this algorithm is $T_{\infty}(n) \in O(n+e)$.
\vspace{-.05in}
\end{proof}

\begin{theorem}
\label{theorem:fgJohnScal}
The fine-grained parallel Johnson algorithm is scalable when $\lim\limits_{n \to \infty} c = \infty$.
\vspace{-.08in}
\end{theorem}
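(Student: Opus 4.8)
The plan is to control the parallel running time with the standard work--span (Brent-type) bound for work-stealing schedulers, $T_p(n) \in O(W_p(n)/p + T_\infty(n))$, and then to evaluate the iterated limit in Definition~\ref{def:scalability}.

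First I would dispose of the $W_p(n)/p$ term. The work estimate obtained inside the proof of Theorem~\ref{theorem:fgj_workEfficiency} gives $W_p(n) \in O(\min\{s,\,pc\}\,(n+e))$, and since $\lim_{n\to\infty} c = \infty$ the degenerate case $c = 0$ can be ignored. Hence $W_p(n)/p \in O(\min\{s/p,\,c\}\,(n+e))$, so for any fixed graph, once $p \ge s/c$ we have $W_p(n)/p \le s(n+e)/p \to 0$ as $p \to \infty$. Together with the trivial lower bound $T_p(n) \ge T_\infty(n)$, this shows $\lim_{p\to\infty} T_p(n) = \Theta(T_\infty(n))$, and therefore $\lim_{p\to\infty} T_p(n)/T_1(n) = \Theta(T_\infty(n)/T_1(n))$.

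It then remains to show $\lim_{n\to\infty} T_\infty(n)/T_1(n) = 0$. For the numerator I would invoke the preceding lemma on $T_\infty(n)$: the recursion tree has depth at most $n$, and the longest chain of dependent operations charged to a single recursive call --- iterating the frontier vertex's neighbours with a parallel loop, emitting each discovered cycle as a parallel leaf task, running the recursive-unblocking cascade on backtracking, and performing the copy-on-steal state restoration --- is bounded by a factor polynomial in $n$ and $e$ that is independent of $c$. For the denominator I would use that the serial Johnson algorithm must enumerate all $c$ cycles, so $T_1(n)$ grows without bound as $c \to \infty$ and, more precisely, outpaces the $c$-independent depth bound. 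This is exactly where the hypothesis $\lim_{n\to\infty} c = \infty$ enters: for a graph with only a bounded number of cycles (say a single simple cycle on all $n$ vertices) one has $T_1(n) = \Theta(T_\infty(n)) = \Theta(n+e)$, so the ratio does not vanish and the algorithm would not be scalable. Combining the numerator and denominator bounds gives $T_\infty(n)/T_1(n) \to 0$, which is the claim.

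The step I expect to be the main obstacle is this final comparison: establishing cleanly, and uniformly over all graph families with $c \to \infty$, that the depth bound from the preceding lemma is genuinely $o(T_1(n))$. In particular one must argue that neither the recursive-unblocking cascades nor the copy-on-steal restoration can stretch the critical path faster than the serial cost of enumerating an unbounded number of cycles already forces $T_1(n)$ to grow. The remaining ingredients --- the work bound from Theorem~\ref{theorem:fgj_workEfficiency} and the manipulation of the double limit --- are routine given what the paper has already proved.
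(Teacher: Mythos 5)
There is a genuine gap, and it sits exactly where you flag it: the ``final comparison'' is not a routine loose end but the crux of the theorem, and your stated bounds cannot close it. Your handling of the inner limit over $p$ (Brent's bound plus the work estimate from Theorem~\ref{theorem:fgj_workEfficiency}) is fine, though unnecessary---by the paper's definition of depth, $\lim_{p\to\infty} T_p(n)$ is simply $T_{\infty}(n)$, so the whole proof reduces to showing $T_{\infty}(n)/T_1(n)\to 0$. For that, the paper's Lemma~\ref{lemma:fgJ_tinf} supplies the sharp span bound $T_{\infty}(n)\in O(n+e)$, with no extra factor of $n$: the dependent operations along any root-to-leaf chain of the recursion tree follow a \emph{simple} path of visited vertices, so their total cost (neighbour iteration plus unblocking) is bounded by $O(n+e)$, not by ``depth $n$ times a per-call polynomial.'' With that bound the ratio is $\frac{n+e}{(n+e)(c+1)} = \frac{1}{c+1}\to 0$ under the bare hypothesis $c\to\infty$, which is the paper's entire argument (and why it remarks that sublinear growth of $c$ suffices).

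Your weaker, $c$-independent ``poly$(n,e)$'' depth estimate---effectively $O(n(n+e))$, which is what holds for the fine-grained Read--Tarjan algorithm (Lemma~\ref{lemma:fgRT_tinf})---would only yield a ratio of order $n/(c+1)$, and hence needs the stronger hypothesis $c/n\to\infty$; that is precisely the assumption the paper invokes for Theorem~\ref{theorem:fgRT_scalability}, not for this theorem. Your proposed rescue for the denominator, namely that the serial algorithm ``must enumerate all $c$ cycles,'' only gives $T_1(n)=\Omega(n+e+c)$, which does not dominate a poly$(n,e)$ span when $c$ grows slowly (say $c=\log n$ on a sparse graph). So without establishing the $O(n+e)$ depth bound for the fine-grained Johnson algorithm---which is the content of the lemma you could not see---the argument as written does not prove scalability under $\lim_{n\to\infty} c=\infty$.
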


\begin{proof}
For this algorithm, $T_1(n) \in O((n+e)(c+1))$ and $T_{\infty}(n) \in O(n+e)$ (see Lemma~\ref{lemma:fgJ_tinf}).
Given our assumption that $\lim\limits_{n \to \infty} c = \infty$, we have $\lim\limits_{n\to\infty}\dfrac{T_{\infty}(n)}{T_1(n)}  =  \lim\limits_{n\to\infty} \dfrac{n+e}{(n+e)(c+1)} = 0$.
Therefore, this algorithm is scalable based on Definition~\ref{def:scalability}.
Note that it is sufficient for $c$ to increase sublinearly in $n$ for this proof to hold.
\vspace{-.07in}
\end{proof}

Even though the fine-grained parallel Johnson algorithm is scalable, a strong or weak scalability is not guaranteed due to the work inefficiency of this algorithm.
Nevertheless, our experiments show that this algorithm is strongly scalable in practice (see Figure~\ref{fig:scalfig}).

\textbf{Summary.} Our relaxation of the strictly depth-first-search-based recursion-tree exploration reduces the pruning efficiency of the Johnson algorithm. 
In the worst case, the fine-grained parallel Johnson algorithm could perform as much work as the brute-force Tiernan algorithm does---i.e., $O(s(n+e))$. However, in practice this worst-case scenario does not happen (see Section~\ref{sect:experiments}).
In addition, our fine-grained parallel Johnson algorithm can suffer from synchronisation issues in some rare cases (see Section~\ref{sect:experiments}) because our copy-on-steal mechanism can lead to long critical sections. In the next section, we introduce a fine-grained parallel algorithm that is scalable, work efficient, and less prone to synchronisation issues.

\section{Fine-grained parallel Read-Tarjan}
\label{sect:tpReadTarjan}

In this section, we show that the Read-Tarjan algorithm is straightforward to parallelise in a scalable and work efficient way.
Because the Read-Tarjan algorithm allocates a new copy of the $Blk$ set during each path extension computation, a recursive call can compute different path extensions in an arbitrary order.
Additionally, discovery of a new path extension results in the invocation of a single recursive call, and these calls can be executed in an arbitrary order.
Consequently, several threads can concurrently explore different paths of the same recursion tree constructed by the Read-Tarjan algorithm for a given starting edge. There are no data dependencies or ordering requirements between different calls apart from those that exist between a parent and a child.
To exploit the parallelism available during the recursion tree exploration, we execute each recursive call and each path extension computation as a separate task that can be independently scheduled. We refer to the resulting algorithm as the fine-grained parallel Read-Tarjan algorithm.

To prevent different threads from concurrently modifying the current path $\Pi$ being explored, each task receives a copy of $\Pi$ from its parent task.
To improve the pruning efficiency, each task also receives a copy of the blocked vertex set $\mathit{Blk}$ from its parent task.
However, unlike the Johnson algorithm, 
the fine-grained parallel Read-Tarjan algorithm does not communicate the $\mathit{Blk}$ sets from child tasks back to their parent tasks.
This parallel algorithm also takes advantage of the copy-on-steal mechanism (see Section~\ref{sect:tpj_copyOnSteal}) to eliminate unnecessary copy operations between tasks executed by the same thread.
However, because the Read-Tarjan algorithm does not use the $\mathit{Blist}$ data structures and the recursive unblocking procedure used by the Johnson algorithm, the critical sections of the fine-grained parallel Read-Tarjan algorithm is much shorter than those of the fine-grained parallel Johnson algorithm.

\textbf{Theoretical analysis.} We now show that the fine-grained parallel Read-Tarjan algorithm is both work efficient and scalable.

\begin{theorem}
The fine-grained parallel Read-Tarjan algorithm is work efficient.
\vspace{-.03in}
\label{theorem:fgRT_workEfficiency}
\end{theorem}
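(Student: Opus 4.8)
The plan is to show that the total work $W_p(n)$ performed by the fine-grained parallel Read-Tarjan algorithm is asymptotically no larger than the serial running time $T_1(n) \in O((n+e)(c+1))$, so that work efficiency follows directly from Definition~\ref{def:workEfficiency}. The key observation to establish is that decomposing the recursive calls and path-extension computations into independently scheduled tasks does not create any additional work relative to the serial execution: the set of tasks performed by the parallel algorithm is exactly the set of recursive calls and path-extension searches that the serial Read-Tarjan algorithm performs. This is the crucial structural difference from the fine-grained parallel Johnson algorithm --- there, work blew up because the $\mathit{Blk}$/$\mathit{Blist}$ updates made by one thread were not visible to other threads exploring the same infeasible region, so the same vertices were revisited up to $p$ times. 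Here, no such information is ever propagated \emph{back} from a child to a parent even in the serial algorithm (each path-extension computation allocates a fresh $\mathit{Blk}$), so a parent call and its sibling subtrees are already independent, and splitting them across threads changes nothing about which vertices get visited.

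First I would recall the structure of a single recursive call of the serial algorithm (as described in Section~\ref{sect:read_tarjan}): it maintains a current path $\Pi$, computes one or more path extensions via DFS from the frontier, and for each vertex $x$ added along a path extension performs one more DFS looking for an alternate extension; each such alternate extension spawns exactly one child recursive call. I would then argue that in the parallel version each of these pieces --- the call itself, and each path-extension DFS --- is executed exactly once, as a task, and that the copy of $\Pi$ and of $\mathit{Blk}$ handed to a task is precisely the state that the corresponding piece of the serial computation would see. The copy-on-steal mechanism only defers the copying of these structures to the moment of an actual steal (reusing the parent's structures otherwise); since $\mathit{Blk}$ is never written back to the parent in the serial algorithm either, reusing versus copying is semantically transparent and adds at most a constant factor of copying overhead per task, each copy costing $O(n+e)$, which is already absorbed into the per-call cost.

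Second I would bound the number of tasks and the cost per task. Each recursive call corresponds to a node of the serial recursion tree, and by the analysis underlying the $O((n+e)(c+1))$ bound of Read and Tarjan~\cite{read_bounds_1975}, there are $O(c+1)$ such nodes, each doing $O(n+e)$ work (the path-extension DFSs, the alternate-extension DFSs, and the $O(n+e)$ copying). Summing over all tasks gives $W_p(n) \in O((n+e)(c+1)) = O(T_1(n))$, independent of $p$. I would also note the preliminary single-threaded check for the existence of a cycle costs $O(n+e)$ and is dominated by this bound. Hence $W_p(n) \in O(T_1(n))$ and the algorithm is work efficient.

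The main obstacle I anticipate is making the claim ``the parallel algorithm performs exactly the same tasks as the serial one'' fully rigorous: one must verify that the copy-on-steal state reconstruction for the Read-Tarjan algorithm genuinely reproduces the serial state and does not, for instance, cause a task to miss pruning and thereby explore extra extensions (which would be the Johnson-style pathology). The argument that saves us is precisely that the Read-Tarjan $\mathit{Blk}$ set is \emph{local} to a path-extension computation and is only ever passed from parent to child, never merged or shared sideways --- so a stolen task, given a faithful copy of its parent's $\Pi$ and $\mathit{Blk}$ at the point of the spawn, is indistinguishable from the corresponding serial sub-computation. I would state this as the key lemma (or inline claim) and keep the remaining counting argument brief, since it is a direct appeal to the complexity analysis of~\cite{read_bounds_1975} already invoked in the excerpt.
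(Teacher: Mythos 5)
Your proposal is correct and follows essentially the same route as the paper's proof: bound the number of tasks by $O(c+1)$ using the fact that the serial Read-Tarjan algorithm executes $O(c)$ recursive calls with at most one call per path extension, bound the per-task cost (DFS plus copying $\Pi$ and $\mathit{Blk}$) by $O(n+e)$, and conclude $W_p(n) \in O((n+e)(c+1)) = O(T_1(n))$. Your extra care in arguing that the stolen tasks faithfully reproduce the serial state (because $\mathit{Blk}$ flows only from parent to child) makes explicit a point the paper leaves implicit, but it does not change the argument.
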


\begin{proof}
Because the Read-Tarjan algorithm executes $O(c)$ recursive calls~\cite{read_bounds_1975}, and each path extension exploration invokes at most one recursive call, our fine-grained parallel Read-Tarjan algorithm is executed using $O(c)$ tasks.
Each task executes a DFS that explores at most $n$ vertices and $e$ edges.
Additionally, each task receives a copy of $\Pi$ and $\mathit{Blk}$, and because these data structures contain at most $n$ vertices, the overhead of copying them is $O(n)$.
Therefore, this algorithm performs $O(n+e)$ work per task.
Because the same amount of work is performed even if there are no cycles in the graph, the total amount of work this algorithm performs is $W_p(n) = O\left((n+e)(c+1)\right)$.
As a result, based on Definition~\ref{def:workEfficiency}, the fine-grained parallel Read-Tarjan algorithm is work efficient.
\vspace{-.07in}
\end{proof}

Using the graph from Figure~\ref{fig:fgj_exampleGraph}, threads of the fine-grained parallel Read-Tarjan algorithm that start from $v_0$ independently explore four different path extensions $\Pi_{E} = v_{1} \rightarrow u_{i} \rightarrow v_2 \rightarrow v_0$, with $i\in \{1\ldots4\}$.
When exploring a path extension $\Pi_{E}$, each thread invokes a DFS starting from $v_2$ to explore a different infeasible region of the search tree as shown in Figure~\ref{fig:fgj_exampleBT}.
Because the DFS would fail to find any other path extensions, the same infeasible region will not be explored more than once.
Therefore, the amount of work the fine-grained parallel Read-Tarjan algorithm performs does not increase compared to its single-threaded execution.

\begin{lemma}
The depth $T_{\infty}(n)$ of the fine-grained parallel Read-Tarjan algorithm is in $O(n(n+e))$.
\vspace{-.07in}
\label{lemma:fgRT_tinf}
\end{lemma}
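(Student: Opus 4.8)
The plan is to bound the depth $T_{\infty}(n)$ by the product of (i) the maximum number of tasks on any chain of dependent tasks in the task graph and (ii) the worst-case time a single task adds to the critical path. For (ii), recall from the proof of Theorem~\ref{theorem:fgRT_workEfficiency} that every task---whether a recursive-call task or a path-extension-computation task---performs one DFS that explores at most $n$ vertices and $e$ edges, and copies the $\Pi$ and $\mathit{Blk}$ structures in $O(n)$ time; hence each task contributes $O(n+e)$ to the critical path. The substance of the lemma is therefore showing that every dependency chain contains only $O(n)$ tasks.

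For (i) I would use the current path $\Pi$ as a monotone progress measure. First, across recursive calls: a child recursive call always receives a current path that strictly extends (as a prefix) that of its parent, so any chain consisting of recursive-call tasks has length at most $|\Pi| \le n$. Second, within a single recursive call: the algorithm walks along its chosen extension $\Pi_E$ one vertex at a time, and the only DFS tasks it launches along that walk are one alternate-path-extension DFS per vertex of $\Pi_E$ that it appends to $\Pi$. Consequently, on any root-to-leaf chain of the task graph, every path-extension DFS other than the single DFS that discovers the initial extension $\Pi_E$ of a recursive call can be charged to a distinct vertex that is appended to $\Pi$; since $|\Pi| \le n$, there are $O(n)$ such DFS tasks, plus $O(n)$ ``initial-extension'' DFS tasks (at most one per recursion level). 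Summing, every dependency chain has $O(n)$ tasks. Multiplying by the $O(n+e)$ per-task contribution gives $T_{\infty}(n) \in O(n(n+e))$; the $O(n)$ copying overhead per task contributes only $O(n^2) \subseteq O(n(n+e))$ in total and is absorbed.

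I would carry out the argument in this order: (1) recall the two task types and their $O(n+e)$ cost; (2) prove the recursion-depth bound $O(n)$ from the strict growth of $\Pi$ between a recursive call and its children; (3) prove the charging bound that the within-call path-extension DFS tasks along any chain are bounded by the number of vertices appended to $\Pi$, hence $O(n)$ globally; (4) assemble. The main obstacle I expect is step (3): one must argue carefully that the per-call work along a single dependency chain is not $\Theta(|\Pi_E|)$ independently at every recursion level (which would yield only the weaker $O(n^2(n+e))$ bound), but is instead amortized against the global growth of $\Pi$. This requires being precise about which DFS-task-to-recursive-call dependency edges actually lie on a common chain, and about how $\Pi$ (and the $\mathit{Blk}$ copy handed to each spawned task) evolves as the algorithm iterates over the vertices of the current path extension.
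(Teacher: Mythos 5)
Your proposal is correct and takes essentially the same route as the paper's proof, which bounds the depth by the recursion depth---at most $n$, since a chain of recursive calls corresponds to the vertices of a longest simple cycle---multiplied by the $O(n+e)$ cost of the DFS performed per call. Your charging argument in step (3) merely makes explicit what the paper leaves implicit, namely that the path-extension DFSs spawned within a call are independent tasks, so any dependency chain picks up only $O(n+e)$ work per recursion level.
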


\begin{proof}
In the worst case, a thread executing this algorithm invokes a recursive call for each vertex of its longest simple cycle, which has a length of at most $n$.
Each recursive call executes a DFS that can visit $n$ vertices and $e$ edges of the graph.
Therefore, the depth of this algorithm is $T_{\infty}(n) \in O\left(n(n+e)\right)$.
\vspace{-.05in}
\end{proof}

The worst-case depth of the fine-grained parallel Read-Tarjan algorithm can be observed when this algorithm is executed on the graph given in Figure~\ref{fig:wcEdgePar}. This graph has $c = 2^{n-2}$ cycles and the length of its longest cycle $v_0 \rightarrow v_1 \ldots \rightarrow v_{n-1} \rightarrow v_0$ is $n$.
The algorithm invokes a recursive call for each vertex of the cycle and performs a DFS in each such call, which leads to $T_{\infty} \in O(n(n+e))$.

\begin{theorem}
The fine-grained parallel Read-Tarjan algorithm is strongly scalable when $\lim\limits_{n \to \infty} c/n = \infty$.
\vspace{-.07in}
\label{theorem:fgRT_scalability}
\end{theorem}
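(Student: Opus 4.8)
The plan is to establish strong scalability by bounding the speedup $T_1(n)/T_p(n)$ from below by $\Theta(p)$ using Brent's theorem, which relates $T_p$, the work $W_p$, and the depth $T_\infty$. First I would recall the standard scheduling guarantee: with a greedy (work-stealing) scheduler, $T_p(n) \in O\bigl(W_p(n)/p + T_\infty(n)\bigr)$. From Theorem~\ref{theorem:fgRT_workEfficiency} we have $W_p(n) \in O((n+e)(c+1))$, and since this algorithm is work efficient, $W_p(n) \in O(T_1(n))$ with $T_1(n) \in \Theta((n+e)(c+1))$. From Lemma~\ref{lemma:fgRT_tinf} we have $T_\infty(n) \in O(n(n+e))$. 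Substituting, $T_p(n) \in O\bigl((n+e)(c+1)/p + n(n+e)\bigr)$.

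Next I would divide through to form the speedup ratio. We get
\[
\frac{T_1(n)}{T_p(n)} \in \Omega\!\left(\frac{(n+e)(c+1)}{(n+e)(c+1)/p + n(n+e)}\right)
= \Omega\!\left(\frac{c+1}{(c+1)/p + n}\right).
\]
Under the hypothesis $\lim_{n\to\infty} c/n = \infty$, the term $n$ in the denominator is asymptotically dominated by $c/p$ for any fixed $p$ (more carefully: $c/n \to \infty$ means $n = o(c)$, so $(c+1)/p + n \sim (c+1)/p$ for fixed $p$), which makes the ratio $\Theta(p)$. Combined with the trivial upper bound $T_1(n)/T_p(n) \in O(p)$ (a $p$-thread execution cannot beat the serial one by more than a factor $p$, since $W_p \ge W_1 \ge T_1$ up to constants, or simply because $p$ threads do at most $p$ times the serial work in the same wall-clock time), we conclude $T_1(n)/T_p(n) = \Theta(p)$ for large enough $n$, which is exactly Definition~\ref{def:strongScalability}.

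The main obstacle is making the ``for large enough $n$'' quantifier in Definition~\ref{def:strongScalability} interact correctly with the fixed-$p$ limit hidden in the $\Theta$ notation: strictly, we need that for each fixed $p$ there is an $n_0(p)$ beyond which $c(n)/p$ dominates $n$, and this is precisely what $\lim_{n\to\infty} c/n = \infty$ delivers. I would state this carefully rather than hand-wave it, since the whole theorem rests on it. A secondary point worth a sentence is justifying the greedy-scheduler bound $T_p \in O(W_p/p + T_\infty)$ in our task model — this is the classical Brent/Graham bound for work-stealing schedulers and can be cited to the references already present (e.g.~\cite{blumofe_scheduling_1999}); I would not reprove it. Everything else is routine substitution.
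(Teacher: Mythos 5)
Your proposal is correct and follows essentially the same route as the paper: both apply Brent's rule $T_1(n)/p \leq T_p(n) \leq T_1(n)/p + T_\infty(n)$ (justified by work efficiency via Theorem~\ref{theorem:fgRT_workEfficiency}), substitute $T_\infty(n) \in O(n(n+e))$ from Lemma~\ref{lemma:fgRT_tinf}, and use the hypothesis $\lim_{n\to\infty} c/n = \infty$ to ensure the depth term is dominated for each fixed $p$ once $n$ exceeds some $n_0(p)$, yielding the two-sided bound $\Theta(p)$ on the speedup. The only difference is cosmetic: you phrase the domination step in asymptotic notation, whereas the paper carries explicit constants $C_0$, $C_1$ and the factor $k = C_1/(C_0+C_1)$; the quantifier care you flag is exactly what the paper's choice of $n_0$ depending on $p$ accomplishes.
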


\begin{proof}
Because the fine-grained parallel Read-Tarjan algorithm is work-efficient, we can apply Brent's rule~\cite{brent_parallel_1974} as follows:
\[\dfrac{T_1(n)}{p} \leq T_p(n) \leq \dfrac{T_1(n)}{p} + T_{\infty}(n).\]
Substituting $T_1(n)$ with $O((c+1)(n+e))$ and $T_{\infty}(n)$ with $O(n(n+e))$ (see Lemma~\ref{lemma:fgRT_tinf}), for a positive constant $C_0$, it holds that
\[\slfrac{1}{\left(\dfrac{1}{p} + \dfrac{T_{\infty}(n)}{T_1(n)}\right)} = \slfrac{1}{\left(\dfrac{1}{p} + \mathit{C_0}\dfrac{n}{c+1}\right)} \leq \dfrac{T_1(n)}{T_{\infty}(n)} \leq p.\]
Given that $\lim\limits_{n \to \infty} c/n = \infty$, there exist $n_0>0, C_1>0$ such that if $n > n_0$, then $(c+1)/n > C_1 p$.
Thus, for every $n > n_0$, it holds that $k p \leq \tfrac{T_1(n)}{T_{\infty}(n)} \leq p$, where $k = C_1/(C_0+C_1) < 1$.
As a result, $\tfrac{T_1(n)}{T_{\infty}(n)}  = \Theta(p)$, which, based on Definition~\ref{def:strongScalability}, completes the proof.
Note that this proof requires $c$ to grow superlinearly with $n$.
\vspace{-.05in}
\end{proof}

\textbf{Summary.} The pruning efficiency of the Read-Tarjan algorithm is not affected by the fine-grained parallelisation. The fine-grained parallel Read-Tarjan algorithm performs $O((n+e)(c+1))$ work: the same as the work performed by its serial version.
In addition, the synchronization overheads of the fine-grained parallel Read-Tarjan algorithm are not as significant as those of the fine-grained Johnson algorithm because of its shorter critical sections.
Furthermore, this algorithm is the only asymptotically-optimal parallel algorithm for cycle enumeration for which we have proved strong scalability. 

\section{Extensions to temporal cycles}
\label{sect:temporalCycle}

To efficiently enumerate temporal cycles, we take advantage of the 2SCENT algorithm contributed by Kumar and Calders~\cite{kumar_2scent_2018}, which is based on the Johnson algorithm. The 2SCENT algorithm introduces two highly effective optimisations, called \textit{closing times} and \textit{path bundling}.
The closing times optimisation extends the concept of the blocked vertex set $\mathit{Blk}$ to keep track of the blocked temporal edges.
The path bundles optimisation enables a single search to simultaneously explore several temporal paths that share a common sequence of vertices.
To enable efficient temporal cycle enumeration, we have incorporated both optimisations into our coarse- and fine-grained parallel algorithms introduced in the prior sections.
Given that our parallel algorithms for temporal cycle enumeration are based on our parallel formulations of the Johnson and the Read-Tarjan algorithms, our conclusions regarding the work efficiency and scalability, summarised in Table~\ref{tab:theoSummary}, remain valid. 
However, we omit the respective proofs due to space constraints.

The 2SCENT algorithm also uses a preprocessing step that reduces the number of vertices visited during its search for cycles.
However, this preprocessing step has a strictly sequential formulation because it processes the edges in the increasing order of their timestamps.
Moreover, the time complexity of the preprocessing step of the 2SCENT algorithm is in the order of the time complexity of its recursive search for cycles.
In our own implementation, we use a lighter-weight linear-time preprocessing algorithm, inspired by the algorithms for computing strongly-connected components~\cite{Tarjan1972DepthFirstSA, fleischer_identifying_2000}, which can be parallelised in a scalable manner.

\textbf{Our scalable preprocessing method} computes a \textit{cycle-union} for each starting edge, which is the set of vertices that take part in one or more temporal cycles starting from that edge.
The cycle-union of a given starting edge $v_0 \rightarrow v_1$ is computed as the intersection between the set of vertices reachable from vertex $v_1$ and the set of vertices from which vertex $v_0$ is reachable.
When computing temporal cycles, we say that a vertex $u$ is reachable from a vertex $v$ if there exists a simple path from $v$ to $u$, in which the edges appear in the increasing order of their timestamps.
When performing the reachability analysis under time window constraints, we only consider the paths that belong to a time window of a given size $\delta$.
This preprocessing method is lightweight because each cycle-union can be computed in $O(n+e)$ time, similarly to the computation of a strongly-connected component~\cite{fleischer_identifying_2000} of a graph, and it is also straightforward to parallelise because the cycle-unions can be computed independently for each starting edge or starting vertex.

\section{Experimental evaluation}
\label{sect:experiments}

This section evaluates the performance of our coarse- and fine-grained parallel versions of the Johnson and the Read-Tarjan algorithms on temporal graphs.
As Table~\ref{tab:relWork} shows, we are the only ones to offer fine-grained parallel versions of the state-of-the-art algorithms by Johnson and Read-Tarjan. However, all the methods covered in Table~\ref{tab:relWork} can be parallelised using the coarse-grained approach we described in Section~\ref{sect:vertEdgePar}, which we use as our main comparison point. Furthermore, we provide direct comparisons with 2SCENT~\cite{kumar_2scent_2018} because it is the only other work that supports time window constraints and can perform temporal cycle enumeration.

Because exhaustive enumeration of simple cycles is not tractable in general, it is common to search for cycles under some constraints (see Table~\ref{tab:relWork}).
In the experiments, we use time-window constraints when searching for both simple and temporal cycles.
Our experiments are performed using the temporal graphs listed in Table~\ref{tab:dataset}.
The TR, FR, and MS graphs are from \emph{Harvard Dataverse}~\cite{jankowski_spreading_2017}, the NL graph is from \emph{Konect}~\cite{kunegis_konect_2013}, the AML graph is from the \emph{AML-Data} repository~\cite{amldata}, and the rest are from \emph{SNAP}~\cite{snapnets}.
We control the complexity of cycle enumeration by selecting the time-window size $\delta$ appropriately for each graph.
The window sizes used in our experiments are given in Table~\ref{tab:dataset}.
We do not report simple cycle enumeration results for the MS graph because, in this case, our algorithms did not finish in $12h$ even if we set $\delta = 1s$.
Note that we use larger time windows when enumerating temporal cycles because the complexity of enumerating temporal cycles is lower.

The experiments are performed on a cluster of four Intel Xeon Phi 7210 - Knights Landing (KNL) processors~\cite{sodani_knights_2015}.\footnote{Intel and Intel Xeon are trademarks or registered trademarks of Intel Corporation or its subsidiaries in the United States and other countries.}
An Intel KNL CPU has $64$ physical cores and supports $256$ simultaneous threads, which makes it an ideal platform for evaluating the scalability of parallel algorithms.
This cluster enables execution of $1024$ simultaneous threads on $256$ physical CPU cores.
We use the \textit{Threading Building Blocks} (TBB)~\cite{kukanov_foundations_2007} library for parallelising the algorithms on a single processor, and we distribute the execution of the algorithms across multiple processors using the Message Passing Interface (MPI)~\cite{mpi_1993}.
When using distributed execution, each processor stores a copy of the input graph in its main memory and searches for cycles starting from a different set of graph edges.
The starting edges are divided among the processors such that when the edges are ordered in the ascending order of their timestamps, $k$ consecutive edges in that order are assigned to $k$ different processors.
Each processor then uses its own dynamic scheduler to balance the workload of the recursive searches that start from its given set of starting edges. 

\begin{table}[t]
\centering
\small 
\caption{Temporal graphs. Time span $\mathbf{T}$ is in days.
Figure~\ref{fig:main_comparison_simple} and~\ref{fig:main_comparison_temp}
use the time window sizes $\mathbf{\delta_{s}}$~and~$\mathbf{\delta_{t}}$, respectively.
}
\vspace{-.1in}
\begin{tabular}{lccc|cc}
	\textbf{Graph}  &  $\mathbf{n}$ & $\mathbf{e}$ & $\mathbf{T}$ & $\mathbf{\delta_{s}}$& $\mathbf{\delta_{t}}$\\ \hline
    \textbf{bitcoinalpha (BA)}  & 3.3 k & 24 k & 1901 & 71h & 3000h\\
    \textbf{bitcoinotc (BO)}   &  4.8 k  & 36 k & 1903 & 75h & 1000h\\
     \textbf{CollegeMsg (CO)}     &  1.3 k  & 60 k & 193 & 3h & 96h\\
	\textbf{email-Eu-core (EM)} &      824 &   332 k  & 803 & 4h & 144h\\
	\textbf{mathoverflow (MO)}   &      16 k & 390 k & 2350 & 30h & 288h\\
    \textbf{transactions (TR)} & 83 k & 530 k & 1803 & 72h & 800h\\
	\textbf{higgs-activity (HG)}  &      278 k & 555 k & 6 & 3000s & 72h\\
  \textbf{askubuntu (AU)} &  102 k & 727 k & 2613 & 20h & 336h\\
	\textbf{superuser (SU)}     &    138 k &  1.1 M & 2773 & 5h & 168h\\
	\textbf{wiki-talk (WT)} &       140 k & 6.1 M & 2277 & 12h & 144h\\
     \textbf{friends2008 (FR)}   &  481 k   & 12 M  & 1826 & 1300s & 5h\\ 
     \textbf{wiki-dynamic (NL)}&  1 M   & 20 M  & 3602 & 29s & 1000s\\ 
     \textbf{messages (MS)}   &  313 k   & 26 M  & 1880 & / & 4h\\ 
     \textbf{AML-Data (AML)}  &  10 M   & 34 M  & 30 & 48h & 720h \\ 
	\textbf{stackoverflow (SO)} &    2.0 M &   48 M &  2774 & 3h & 66h\\
    \hline
\end{tabular}
\label{tab:dataset}
\vspace{-.15in}
\end{table}

\begin{figure*}[ht!]
		\centering
	\begin{subfigure}[]{1\textwidth}
		\includegraphics[width=1\linewidth]{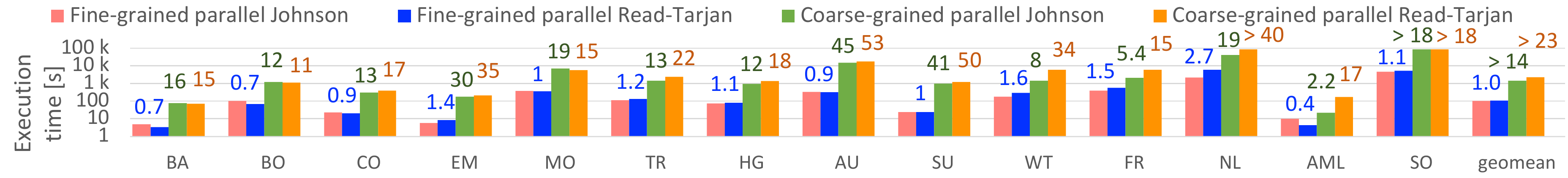}
	\vspace{-.25in}
		\caption{Performance of the parallel algorithms for finding all simple cycles within a time window of size $\delta_s$ given in Table~\ref{tab:dataset}.
		}
		\label{fig:main_comparison_simple}
	\end{subfigure}
	\\
	\begin{subfigure}[]{1\textwidth}
		\includegraphics[width=1\linewidth]{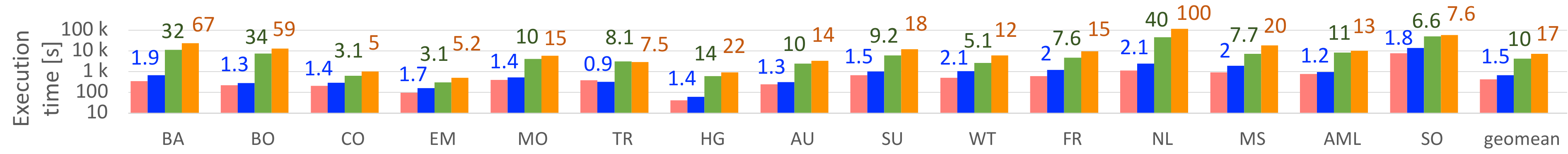}
	\vspace{-.25in}
		\caption{Performance of the parallel algorithms for finding all temporal cycles within a time window of size $\delta_t$ given in Table~\ref{tab:dataset}.}
		\label{fig:main_comparison_temp}
	\end{subfigure}
	\vspace{-.1in}
	\caption{Comparisons between the fine-grained and the coarse-grained parallel versions of the Johnson and the Read-Tarjan algorithms for (a) simple and (b) temporal cycle enumeration using $\mathbf{1024}$ threads.
	The numbers above the bars show the execution time of each algorithm relative to that of the fine-grained parallel Johnson algorithm for the same benchmark.
	}
	\vspace{-.05in}
	\label{fig:main_comparison}
\end{figure*}

\begin{figure*}[t]
	\centerline{
		\includegraphics[width=1\linewidth]{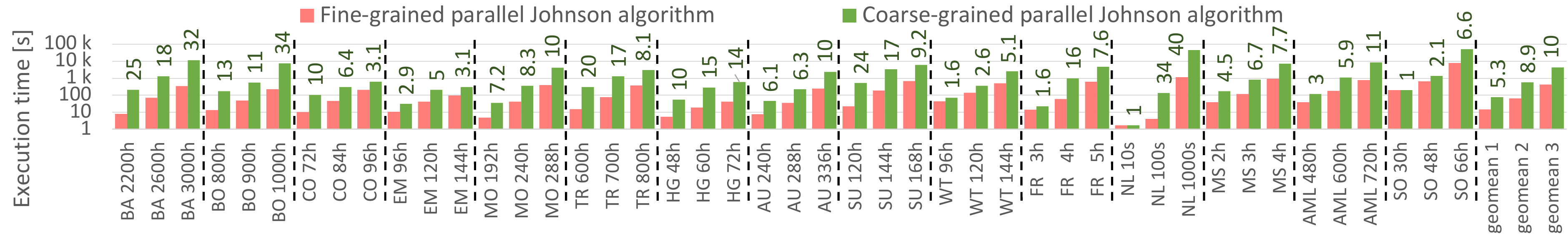}
	}
	\vspace{-.15in}
	\caption{
	The speed-up of the fine-grained parallel Johnson algorithm with respect to the coarse-grained parallel Johnson algorithm for different time window sizes. Larger time windows increase the performance gap between the two algorithms.
	}
	\label{fig:tw-effect}
	\vspace{-.15in}
\end{figure*}

\textbf{The granularity of the tasks} has a significant impact on the performance of the parallel cycle enumeration algorithms. Figure~\ref{fig:main_comparison} compares the coarse- and fine-grained parallel versions of the Johnson and the Read-Tarjan algorithms.
These comparisons are provided for both simple cycle enumeration and temporal cycle enumeration, respectively in Figures~\ref{fig:main_comparison_simple} and~\ref{fig:main_comparison_temp}. 
We observe that our fine-grained parallel algorithms outperform the coarse-grained parallel algorithms by an order of magnitude both for simple cycle enumeration and for temporal cycle enumeration.
This behavior is a clear outcome of the scalability of our fine-grained parallelisation.

Figure~\ref{fig:tw-effect} shows the impact of the time window size on the fine-grained and coarse-grained parallel Johnson algorithms when performing temporal cycle enumeration.
Note that enumerating cycles in larger time windows is more challenging because larger time windows contain a larger number of cycles. Interestingly, increasing the size of the time window increases the performance gap between the fine-grained and the coarse-grained Johnson algorithms.

The work performed by the algorithms evaluated can be quantified based on the number of edges visited during their execution. Based on this metric, our fine-grained parallel Johnson algorithm on average performs $6.1\%$ more work than the work-efficient coarse-grained parallel Johnson algorithm does when enumerating simple cycles. The maximum difference observed is around $14\%$. The difference is always less than $1\%$ when enumerating temporal cycles.

\textbf{The evaluation of the scalability} of the coarse-grained parallel and fine-grained parallel algorithms is performed in Figure~\ref{fig:scalfig}.
All three algorithms evaluated perform temporal cycle enumeration and use up to $1024$ software threads. We also report the performance of the 2SCENT algorithm~\cite{kumar_2scent_2018}.
In this setting, the only difference between our single-threaded Johnson algorithm and 2SCENT is the scalable pre-processing method we introduced in~Section~\ref{sect:temporalCycle}.

\begin{figure*}[t]
	\centerline{
		\includegraphics[width=1\linewidth]{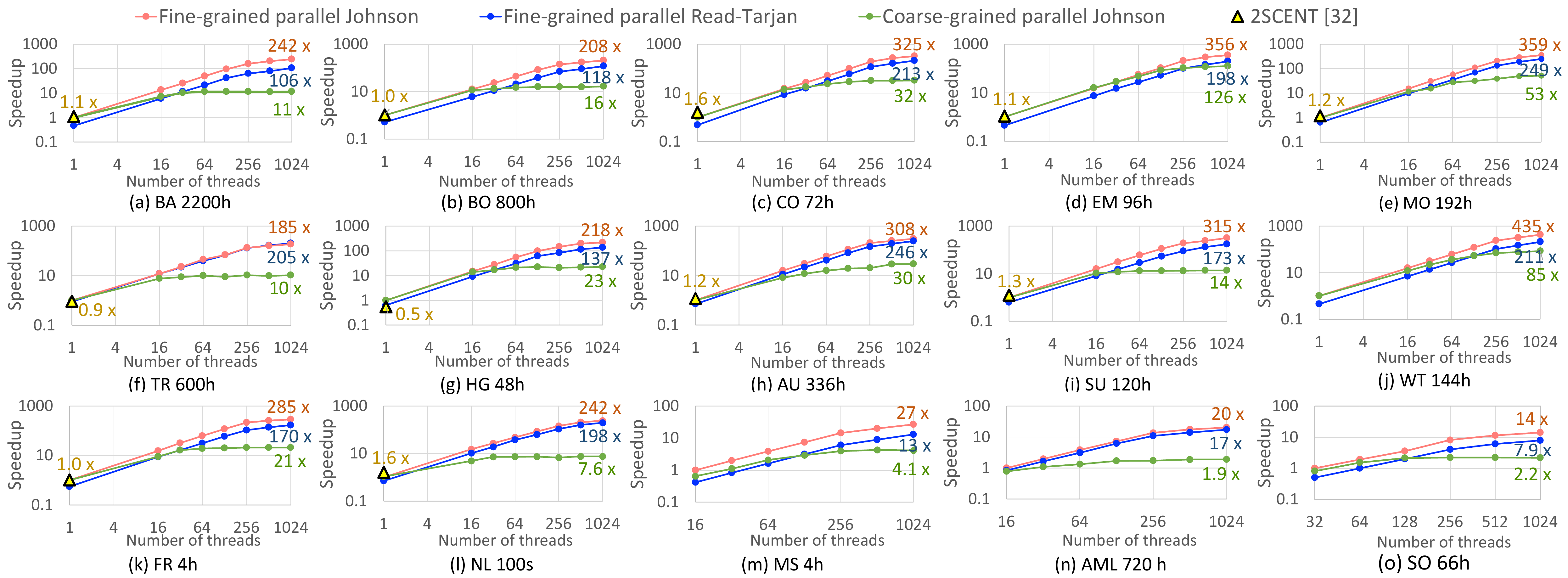}
	}
	\vspace{-.15in}
	\caption{Effect of the number of threads on the performance of temporal cycle enumeration algorithms. The baseline is our fine-grained parallel Johnson algorithm.
	The relative performance of 2SCENT~\cite{kumar_2scent_2018} is shown when it completes in $\mathbf{24h}$. Note that the 2SCENT implementation is single-threaded and the single-threaded execution results are not available for all graphs. 
	}
	\label{fig:scalfig}
	\vspace{-.1in}
\end{figure*}

The performance of the fine-grained parallel Johnson and Read-Tarjan algorithms improves linearly until $256$ threads. When we use more than $256$ threads, the CPU cores start performing simultaneous multithreading, which leads to a sublinear performance scaling.
In case of the WT graph, our fine-grained parallel versions of the Johnson and the Read-Tarjan algorithms are respectively $435\times$ and $470\times$ faster than their serial versions  when we use $1024$ threads. 
In addition, our fine-grained parallel Johnson algorithm is on average  $260\times$ faster than 2SCENT when 2SCENT completes in $24$ hours.

\textbf{The Johnson and the Read-Tarjan algorithms} have comparable performances as shown in Figure~\ref{fig:main_comparison}.
However, our fine-grained parallel Read-Tarjan algorithm is slightly slower than our fine-grained parallel Johnson algorithm.
This behaviour is expected given that the Read-Tarjan algorithm performs more edge visits than the Johnson algorithm despite having the same worst-case time complexity (see Section~\ref{sect:read_tarjan}). In our experiments, the fine-grained parallel Read-Tarjan algorithm on average performs $47\%$ more edge visits than the fine-grained parallel Johnson algorithm. 

The simple cycle enumeration results for AML are clear outliers. The coarse-grained parallel Read-Tarjan algorithm performs $2.4\times$ more edge visits than the coarse-grained parallel Johnson algorithm, yet it is $7.7\times$ slower due to a more severe load imbalance. However, our fine-grained parallel Johnson algorithm is $2.3\times$ slower than our fine-grained  parallel Read-Tarjan algorithm.  In this case, the fine-grained parallel Johnson algorithm is only $37\times$ faster than its serial version. However, the fine-grained parallel Read-Tarjan algorithm exhibits a good scaling and is $214\times$ faster than its serial version.

Our analysis has shown that such a role reversal is not caused by the work inefficiency of our fine-grained parallel Johnson algorithm, which performs only $10\%$ more edge visits than its serial version when enumerating the simple cycles of AML. 
The reason is the synchronization overheads exerted on our fine-grained parallel Johnson algorithm by recursive unblocking (see Section~\ref{sect:tpj_copyOnSteal}).
In fact, the synchronization overheads of our fine-grained parallel Johnson algorithm are visible only when enumerating the simple cycles of AML, which can be explained by a very low cycle-to-vertex ratio observed in this case.
Because a vertex is blocked if it cannot take part in a cycle, the probability of a vertex being blocked is higher when the cycle-to-vertex ratio is lower.
In consequence, more vertices are unblocked during the recursive unblocking of the fine-grained parallel Johnson algorithm, which leads to longer critical sections and more contention on the locks. Nevertheless, our fine-grained parallel Johnson algorithm achieves a good trade-off between pruning efficiency and lock contention in most cases.

\section{Conclusions}
\label{sect:conclusion}

This work has made three contributions to the area of parallel cycle enumeration.
First, we have introduced fine-grained parallel versions of the Johnson and the Read-Tarjan algorithms for enumerating simple and temporal cycles.
We have shown that our fine-grained parallel algorithms are scalable both in theory and in practice.
We have evaluated our algorithms on 15 temporal graph datasets, and demonstrated a near-linear performance scaling on a compute cluster with a total number of $256$ CPU cores that can execute $1024$ simultaneous threads, where
our parallel algorithms achieved an up to $470\times$ speedup with respect to their serial versions.

Secondly, we have shown that the coarse-grained parallel versions of the Johnson and the Read-Tarjan algorithms are not scalable.
When using $1024$ simultaneous software threads, our fine-grained parallel algorithms are on average an order of magnitude faster than the coarse-grained parallel algorithms.
In addition, the performance gap between the fine-grained and coarse-grained parallel algorithms increases as we use more physical cores. The performance gap increases as we increase the time window size as well.

Thirdly, we have shown that our fine-grained parallel Johnson algorithm is not work efficient.
Yet, it outperforms our fine-grained parallel Read-Tarjan algorithm in most of our experiments.
In some rare cases, our fine-grained parallel Johnson algorithm can suffer from synchronisation overheads. In such cases, our fine-grained parallel Read-Tarjan algorithm offers a more scalable alternative.

\begin{acks}
The support of Swiss National Science Foundation (project number 172610) for this work is gratefully acknowledged.
The authors would like to thank Haris Pozidis and Radu Stoica from IBM Research Europe - Zurich for their valuable comments on this work.
\end{acks}

\balance

\bibliographystyle{ACM-Reference-Format}
\bibliography{References}

\end{document}